\newtheorem{lemma}{Lemma}
\newtheorem{claim}{Claim}
\newtheorem{theorem}{Theorem}
\newtheorem{remark}{Remark}
\newtheorem{corollary}{Corollary}
\newtheorem{proposition}{Proposition}
\DeclareMathOperator{\Tr}{Tr}
\DeclareMathOperator{\Id}{Id}
\begin{document}

\title{A Pursuit-Evasion Differential Game with Strategic Information Acquisition}

% \author{Yunhan~Huang,~\IEEEmembership{Student Member,~IEEE}
%   % <-this % stops a space
% \thanks{Yunhan Huang and Quanyan Zhu are with the Department
% of Electrical and Computer Engineering, New York University, USA,
% NY, 11201 USA e-mail: yh.huang@nyu.edu, qz494@nyu.edu.}% <-this % stops a space
% % <-this % stops a space
% }

\author{Yunhan Huang$^{1}$  and Quanyan Zhu$^{1}$% <-this % stops a space
%\thanks{This work was not supported by any organization}% <-this % stops a space
\thanks{$^{1}$ Y. Huang and Q. Zhu are with the Department of Electrical and Computer Engineering,
        New York University, 370 Jay St., Brooklyn, NY.
        {\tt\small \{yh.huang, qz494\}@nyu.edu}}%%
}

% The paper headers
\markboth{ArXiv Version}%
{Huang \MakeLowercase{\textit{et al.}}: A draft}
% The only time the second header will appear is for the odd numbered pages
% after the title page when using the twoside option.
% 
% *** Note that you probably will NOT want to include the author's ***
% *** name in the headers of peer review papers.                   ***
% You can use \ifCLASSOPTIONpeerreview for conditional compilation here if
% you desire.

% If you want to put a publisher's ID mark on the page you can do it like
% this:
%\IEEEpubid{0000--0000/00\$00.00~\copyright~2015 IEEE}
% Remember, if you use this you must call \IEEEpubidadjcol in the second
% column for its text to clear the IEEEpubid mark.

% use for special paper notices
%\IEEEspecialpapernotice{(Invited Paper)}

% make the title area
\maketitle

% As a general rule, do not put math, special symbols or citations
% in the abstract or keywords.
\begin{abstract}
This paper studies a two-person linear-quadratic-Gaussian pursuit-evasion differential game with costly but controlled information. One player can decide when to observe the other player's state. However, one observation of another player's state comes with two costs: the direct cost of observing and the implicit cost of exposing his state. We call games of this type a Pursuit-Evasion-Exposure-Concealment (PEEC) game. The PEEC game constitutes two types of strategies: The control strategies and the observation strategies. We fully characterize the Nash control strategies of the PEEC game using techniques such as completing squares and the calculus of variations. We show that the derivation of the Nash observation strategies and the Nash control strategies can be decoupled. We develop a set of necessary conditions that facilitate the numerical computation of the Nash observation strategies. We show, in theory, that players with less maneuverability prefer concealment to exposure. We also show that when the game's horizon goes to infinity, the Nash observation strategy is to observe periodically, and the expected distance between the pursuer and the evader goes to zero with a bounded second moment. We conducted a series of numerical experiments to study the proposed PEEC game. We illustrate the numerical results using both figures and animation. Numerical results show that the pursuer can maintain high-grade performance even when the number of observations is limited. We also show that an evader with low maneuverability can still escape if the evader increases his stealthiness.
\end{abstract}

% Note that keywords are not normally used for peerreview papers.
% \begin{IEEEkeywords}
% IEEE, IEEEtran, journal, \LaTeX, paper, template.
% \end{IEEEkeywords}

% For peer review papers, you can put extra information on the cover
% page as needed:
% \ifCLASSOPTIONpeerreview
% \begin{center} \bfseries EDICS Category: 3-BBND \end{center}
% \fi
%
% For peerreview papers, this IEEEtran command inserts a page break and
% creates the second title. It will be ignored for other modes.
\IEEEpeerreviewmaketitle

\section{Introduction}

Pursuit-Evasion (PE) refers to the problem in which one or more evaders try to escape from one or several pursuers. \textit{Berge}, in 1957, initiated a PE problem where evaders move in a prescribed trajectory and the pursuers track with certain control constraints. In 1965, Isaacs, recognized as the father of differential games, bridges the problem of PE and zero-sum differential game in his seminal work \cite{isaacs_differential_1965}. Propelled by early pioneers such as John Breakwell, Richard Bellman, Lev Pontryagin, and Yu-Chi Ho, the study of PE differential games, advancing in parallel with the theory of differential games and optimal control, have been flourished over the past half-century  \cite{ho1965differential,foley1974class,bagchi1981linear,bacsar1998dynamic,li2011defending,duncan2014linear,glizer2015linear,duncan2015linear,hayoun2016mixed,oyler2016pursuit,jagat2017nonlinear,talebi2017cooperative,pachter2019toward,lopez2019solutions,weintraub2020introduction,garcia2020two}. The motive behind PE games is not limited to physical entities pursuing one another. Various formulations of PE games empower the problem solving in other research areas such as robotics, sports/games, target defense and cybersecurity \cite{kehagias2014role,li2011defending,pachter2019toward,garcia2020two,huang2020dynamic,singh2101dynamic}.

Among the differential game studies, particular attention is paid to information patterns of the Linear-Quadratic-Gaussian (LQG) differential games. The information pattern of a dynamic game describes the available information to each player at each state for sequential decision. Two classical information pattern is the open-loop pattern, under which players only know the initial state of the game, and the feedback pattern with full information\cite{isaacs_differential_1965,ho1965differential}. As far as information patterns are concerned, there are essentially three possibilities: no information, perfect(exact) information, or partial information. These possibilities lead to nine different cases of two categories separated based on the symmetric of information for two-player games. Many efforts has been dedicated to tackling different cases of information pattern \cite{basar1976uniqueness,behn1968class,rhodes1969differential,bagchi1981linear,bernhard1988saddle,gupta2014common,clemens2019lqg}.

In studies of PE differential games, it is a common assumption that state information is available any time to both players \cite{ho1965differential,foley1974class,li2011defending,duncan2014linear,duncan2015linear,duncan2018linear,jagat2017nonlinear,talebi2017cooperative,pachter2019toward,lopez2019solutions,garcia2020two}.
However, in real-world applications, state information, especially information regarding one's opponent, is not always available and usually comes with a price to attain. Examples of situations where information are costly can be found in many scenarios. One is the price of sensing, which includes monetary expense such as power consumption, deployment costs, and etc. For example, a radar measurement can easily lead to megawatts of power usage. The recent booming shared economy also encourages decision makers to acquire information from third-party service providers who have pre-deployed sensors and charge a pay-as-you-go price as their sensing resources are used. Another is the cost of communication. The cost of communication can be prohibitive for long-distance remote decision making tasks such as spacecraft and satellite re-orbiting, control of unmanned combat aerial vehicles.

Apart from the monetary cost such as the price of sensing and the cost of communication, there are also indirect costs of observation. One such indirect cost is from stealth considerations. In military affairs, the innovation of more advanced and autonomous information and communication technologies has engendered a new revolution, making the battle in cyberspace as crucial as the ones in physical battlefields. The ability to remain stealthy and to be deceptive becomes the most valued characteristics of battlefield things. For example, submarines are equipped with active sonars and passive sonars to detect its surroundings. Active sonars can detect `quiet' objects that passive sonars are not able to detect. However, the use of active sonars may expose the submarine itself. Despite the increasing interaction of players in the information space, frameworks that can capture the intricacy of the information interactions between players are missing in the existing literature.

To fill the void, in this work, we study the controlled information structure of LQG PE differential games with a finite horizon, where players can decide at each stage whether to attain information or not, which we call a Pursuit-Evasion-Exposure-Concealment (PEEC) game. Acquiring information is referred to as ``making an observation'' here, which sometimes is called ``taking a measurement'' in some references\cite{bernhard1988saddle,gupta2014common,Maity2017ACC,maity2017linear,clemens2019lqg,huang2020infinite}. Each observation comes with a cost that whoever makes this observation has to pay. Besides the quantitative price, the player who chooses to observe the other player may also expose his state information. In real-world applications, the cost of observation may come from sensing and/or the cost of communication and stealth considerations. For example, a radar measurement can easily lead to megawatts of power usage and the measurer's exposure to the target. In the PEEC game, each player has to decide when to observe by developing the observation strategy and how to control by designing the control strategy.

One related area of research is PE games with limited sensing capabilities, where players have limited sensing capabilities that allow it to observe the other players only if they fall within its sensing range \cite{lin2015nash,bopardikar2008discrete,cruz2001game,galati2007effectiveness}. Different from these works, we focus on controlled sensing where players have control over when they need to sense and when they should not. Their sensing can be limited or prohibited due to high monetary cost or stealth considerations at certain time. 

The problem of controlled observations with costs has been studied in the context of finite-horizon optimal control \cite{cooper1971optimal}, infinite-horizon optimal control\cite{huang2020infinite}, and Markov decision process \cite{huang2019continuous}. Jan Geert Olsder studied costly observations in a discrete-time dynamic game setting \cite{olsder1977observation}, where each player at each step makes independent observation choices and obtains their private observations. The author proposes a matrix game to solve for a Nash observation strategy, whose derivation becomes prohibitive when the game's horizon increases. Hence, only a two-stage game problem is investigated. In \cite{huang2021cross}, the authors extended the framework of dynamic games with costly observations into the context of security problems in cyber-physical systems where one player chooses to observe, and the other chooses whether to jam the observation or not. Both \cite{olsder1977observation} and \cite{huang2021cross} focus on discrete-time dynamic games. Dipankar Maity et al. \cite{Maity2017ACC}, and \cite{maity2017linear} study dynamic games with controlled observations in a continuous-time setting, where each player can only choose to observe at a finite number of times. In \cite{Maity2017ACC} and \cite{maity2017linear}, each player receives their private observations, and one player's observation decision won't affect the other player's information set. Our work is different from \cite{olsder1977observation,huang2021cross,Maity2017ACC,maity2017linear} in three ways. First, we study the controlled observations in a PE differential game setting, where the two players have specific goals (one is chasing, the other is avoiding). And this situation can result in interesting interactions between the two players in terms of observation strategies. Second, our work deals with an information pattern that previous works have not investigated. That is when one player chooses to observe, his/her information also exposes to his opponent. Third, we fully characterize the Nash control strategies and develop a set of necessary conditions, with which we design a numerical algorithm to compute the Nash observation strategies.

The contributions of this work is summarized as follows.
\begin{enumerate}
    \item We propose a new type of PE differential game called the PEEC game, where both the pursuer and the evader don't know each other's state information and can decide when to observe it by paying a cost. This framework introduces the concept of controlled information to PE differential games, which expand the interactions between the pursuer and the evader to not just the physical layer but also the battlefield of information.
    
    \item We first leverage It\^{o}'s formula and completion of squares to obtain the Nash control strategy structure. We show that the Nash control strategies are the same as would be obtained in a perfect feedback setting. Next, we fully characterize the Nash control strategies for any given observation strategies using the calculus of variations. We show that the derivation of the Nash observation strategies and the Nash control strategies can be decoupled. The Nash control strategies have the certainty equivalence property and satisfy the separation principle. And the observation strategies are determined only by system characteristics. We show that players with less maneuverability prefer concealment to exposure and the optimal number of observations within a finite horizon is inversely proportional to the cost of observation. We develop a set of necessary conditions that helps characterize the Nash observation strategies, with which we design an effective numerical algorithm to compute the optimal observation strategy.
    
    \item We analyze the asymptotic propertis of the game under the ergodic cost criterion. We show that when the horizon of the game goes to infinity, it is optimal to observe/expose periodically. We characterize the optimal inter-sampling period and show that the distance between the pursuer and the evader is stabilizable in the mean with bounded second moment . 
    
    \item Leveraging the theoretical results, we numerically characterize the observation strategies. In numerical studies, we illustrate the pursuer and the evader's actions in the PEEC game using both figures and animation. The results show that a pursuer with higher maneuverability than the evader prefers more exposures/observations. But the pursuer can achieve reasonably good performance even when the number of observations is limited. The Nash observation strategy enables the pursuer to observe efficiently (observe less often while maintaining a good performance). We also show that when only a limited number of observations are available, larger system disturbances give an evader with less maneuverability more advantage. A less maneuverable evader can still escape if he/she can avoid being detected by his/her opponent frequently by making it more expensive for his/her opponent to observe.  
\end{enumerate}

\subsection{Notation}
In this paper, $\mathbb{R}$ represents the set of real numbers, $\mathbb{N}$ refers to the set of natural numbers including zero. Given any vector or matrix $x$, $x'$ means the transpose of $x$. Given any square matrix $M$, $\Tr(M)$ denote the trace of $M$. Given any vector $x$ and positive semi-definite matrix $Q$ with proper dimension, $\Vert x \Vert_Q = x'Qx$. Note that depending on the positive definiteness of $Q$, $\Vert \cdot\Vert_Q$ is not necessarily a norm. Let $M$ be any vector or matrix, $\dot{M}$ is the derivative of $M$ with respect to time. Given any two square matrix $M_1$ and $M_2$, $M_1\geq M_2$ meas $M_1 - M_2$ is positive semi-definite. Let $n$ be a positive integer, $\Id_n$ is a identity matrix with dimension $n\times n$.

\section{Formulation}\label{Sec:Formulation}

We consider a class of pursuit-evasion (PE) games described by the following linear stochastic differential equation:
\begin{equation}\label{StochasticDynamics}
    dx(t) = Ax(t)dt + B_p u_p(t)dt - B_e u_e(t)dt + C dw(t),
\end{equation}
with $x(0) = x_0$, where the initial position $x_0\in \mathbb{R}^n$ is not random and disclosed to both the pursuer and the evader, $x(t)\in \mathbb{R}^n$ captures the states (locations) of both players at time $t$. The terms $u_p(t)\in\mathbb{R}^{m_p}$ and $u_e\in \mathbb{R}^{m_e}$ denote respectively the control actions of the pursuer and the evader at time $t$. Here, $w(t)$ is a $q$-dimensional real-valued standard Wiener process independent of $x_0$. The positive integers $(n,m_p,m_e,q)$ are arbitrary. Moreover, $A$, $B_p$, $B_e$ and $C$ are real-valued matrices with appropriate dimensions. Let $\mathcal{I}_p(t)$ and $\mathcal{I}_e(t)$ be respectively the information available to the pursuer and the evader at time instance $t$. The family of admissible strategies for $u_p$ is $\mathcal{U}_p$, where $\mathcal{U}_p$ is a set of all possible $u_p$ such that $u_p$ is progressively measurable with respective to $\mathcal{I}_p(t)$ and square-integrable on $[0,T]$ almost surely. We define $\mathcal{U}_e$ in a similar way.

To characterize the objective of each player in classic PE games, we introduce a quadratic functional of $x$, $u_p$, and $u_e$, over a finite time horizon $[0,T]$:
$$
\begin{aligned}
J^0(u_p,u_e) &= \mathbb{E}\Big[ x(T)'Q_Tx(T)+\\
&\int_{0}^T x(t)' Q x(t) + u_p(t)' R_p u_p(t) - u_e(t)' R_eu_e(t)  \Big],
\end{aligned}
$$
where expectation $\mathbb{E}\left[\cdot\right]$ is over the statistics of $\{w(t),t\geq 0\}$; further, $Q$ and $Q_T$ are real-valued non-negative definite matrices, and $R_{p}$ and $R_{e}$ are real-valued positive definite matrices with appropriate dimensions. The objective of the pursuer is to find a $u_p \in \mathcal{U}_p$ that minimizes $J^0$ and the evader aims to do the opposite. In classic PE games, a common assumption is that the state history is fully observable to both players, i.e., $\mathcal{I}_p(t)=\mathcal{I}_e(t)=\{x(s),s\leq t\}$.

In this paper, we consider a PE game controlled information structure, in which both the purser and the evader can decide when to observe over the time interval $(0,T]$. Both players don't have the knowledge of the state unless they choose to observe. When a player decides to observe at time instance $t$, the player receives the state information $x(t)$. But the observation induces a non-negative cost and at the same time exposes the state information to the other player. The cost per observation is $O_p \in [0,\infty)$ for the pursuer and $O_e\in[0,\infty)$ for the evader. Let $\Omega_p = (N_p,\mathcal{T}_p)$ be the observation decisions of the pursuer, which include the number of observations made over time interval $(0,T]$, denoted by $N_p\in\mathbb{N}$ and the set of time instances when observations are made, denoted by $\mathcal{T}_p = \{t_{p,1}, t_{p,2},\cdots, t_{p,N_p}\}$. We have $\Omega_e = (N_e,\mathcal{T}_e)$ with $N_e\in\mathbb{N}$ and $\mathcal{T}_e = \{t_{e,1}, t_{e,2},\cdots, t_{e,N_e}\}$ defined similarly. The time instances when at least one of the players decides to observe is denoted by $\mathcal{T} = \mathcal{T}_p \cup \mathcal{T}_e$. Without loss of generality, we write $\mathcal{T}=\{t_1,t_2,\cdots,t_{N_p + N_e}\}$, where time instances in $\mathcal{T}$ are ordered as $0<t_1\leq t_2 \leq \cdots t_{N_p+ N_e}$. Since the observation made by one player will be exposed to the other player, the information available to the pursuer and the evader at time $t$ can be written as $\mathcal{I}(t)\coloneqq \mathcal{I}_p(t) = \mathcal{I}_e(t) =\{x(s)|0<s\leq t,s\in\mathcal{T}\}$. Therefore, the objective of the pursuer is to find an observation strategy $\Omega_p$ and an control strategy $u_p$ that minimize the following cost functional
\begin{equation}\label{CostFunctional}
\begin{aligned}
&J(\Omega_p,u_p, \Omega_e,u_e) = \mathbb{E}\Big[ O_p N_p - O_eN_e + x(T)'Q_Tx(T)+\\
&\ \ \ \ \int_{0}^T x(t)' Q x(t) + u_p(t)' R_p u_p(t) - u_e(t)' R_eu_e(t)  \Big].
\end{aligned}
\end{equation}
Meanwhile, the evader aims to minimize $J(\Omega_p,u_p, \Omega_e,u_e)$ with an optimal observation strategy $\Omega_e$ and an optimal control strategy $u_e$. The two players (the pursuer $\mathcal{P}$ and the evader $\mathcal{E}$), their strategies $(\Omega_p,u_p)$ and $(\Omega_e,u_e)$, the cost functional $J(\Omega_p,u_p, \Omega_e,u_e)$ in \cref{CostFunctional}, and the associated state dynamics given in \cref{StochasticDynamics} constitute a linear-quadratic-Gaussian zero-sum differential game with special controlled information structure, which we call a Pursuit-Evasion Exposure-Concealment (PEEC) game.

\begin{remark}
Our framework can capture pursuit-evasion differential games in various forms \cite{weintraub2020introduction,foley1974class,glizer2015linear,jagat2017nonlinear,li2011defending,bagchi1981linear,talebi2017cooperative,garcia2020two}. In pursuit-evasion differential games studied in these works , the pursuer and the evader usually have independent dynamics.  
$$
\begin{aligned}
dx_p(t) &= A_p x_p(t)dt + \tilde{B}_p u_p(t)dt + \tilde{C}_p dw_p(t),\\
dx_e(t) &= A_e x_e(t)dt + \tilde{B}_e u_e(t)dt + \tilde{C}_e dw_e(t),\\
\end{aligned}
$$
where $x_p(t)\in\mathbb{R}^{m}$ and $x_e(t)\in\mathbb{R}^{m}$. This general dynamics can be captured by our framework by defining $x = [x_p'\ x_e']'$, $w = [w_p' w_e']'$,  
$$
\begin{aligned}
&Q = \begin{bmatrix}
\Id_{m} & -\Id_{m}\\
-\Id_{m} & \Id_{m}
\end{bmatrix},\ \ \ C = \begin{bmatrix}
\tilde{C}_p & 0\\ 0 & \tilde{C}_e\end{bmatrix},\\
&B_p = \begin{bmatrix}\tilde{B}_p\\0
\end{bmatrix},\ \textrm{and}\ B_e = \begin{bmatrix}
0 \\ \tilde{B}_e
\end{bmatrix}.
\end{aligned}
$$
Note that this formulation yieds $x'Qx = \Vert x_p - x_e\Vert_2^2$, which describes the objectives of both the pursuer and the evader. This formulation has been adopted in \cite{foley1974class,li2011defending,garcia2020two,hayoun2016mixed}. Another way of formulating is letting $x = x_p -x_e$, when $A_p = A_e$. Let $Q = \Id_m$, we have $x'Qx = \Vert x_p -x_e \Vert_2^2$. This formulation has been used in \cite{glizer2015linear,jagat2017nonlinear,bagchi1981linear,talebi2017cooperative}.
\end{remark}

\begin{remark}
We consider a special information structure that is neither open-loop nor close-loop. The players have symmetric information. Both players have control over the information they receive, and one player's decision can affect the information the other player receives. The information the players have will further affect their control. The two players' observation strategies $(N_p,\mathcal{T}_p)$ and $(N_e,\mathcal{T}_e)$ decide the set of time instances $\mathcal{T}=\mathcal{T}_p\cup \mathcal{T}_e$ when information will be available. This set $\mathcal{T}$ determines $\mathcal{I}(t)$, which the controls have to be adapted to. Apart from $\mathcal{I}(t)$, it is tacitly assumed that the system characteristics 
$$\mathcal{I}_s = (A, B_p, B_e, C, x_0, Q_T, Q, R_p, R_e, O_p, O_e)$$ are known to both players.
\end{remark}

% \begin{remark}
% Explaining terms in cost functional.
% \end{remark}

\section{Characterization of Nash Strategies}

In this section, we study the existence and the characterization of Nash strategies for the PEEC game. The Nash strategies involve the Nash observation strategies and the Nash control strategies selected by the two players. To characterize the Nash strategies, we first characterize the Nash control strategies for every possible observation strategies. That is for every possible $\Omega_p$ and $\Omega_e$, we characterize the Nash control strategies  $(u_p^*,u_e^*)\in \mathcal{U}_p\times \mathcal{U}_e$, such that
$$
J(\Omega_p,u^*_p, \Omega_e,u_e) \leq  J(\Omega_p,u^*_p, \Omega_e,u^*_e) \leq J(\Omega_p,u_p, \Omega_e,u^*_e),
$$
for every $u_p\in \mathcal{U}_p$ and $u_e\in \mathcal{U}_e$. Note that here the set of admissible control strategies $\mathcal{U}_p$ depends on the information structure $\mathcal{I}$, which is controlled by both players $\mathcal{P}$ and $\mathcal{E}$ through $(\Omega_p,\Omega_e)$. Hence, $u^*_p$ and $u^*_e$ also depends on $\Omega_p,\Omega_e$. Then, we write
\begin{equation}\label{DefineCostFunctionalUnderNash}
\tilde{J}(\Omega_p, \Omega_e) \coloneqq J(\Omega_p,u^*_p(\Omega_p,\Omega_e), \Omega_e,u^*_e(\Omega_p,\Omega_e)),
\end{equation}
where we emphasize the dependence of the Nash control strategies on $(\Omega_p,\Omega_e)$.

Next, we characterize the Nash observation strategies by finding a pair $(\Omega_p^*, \Omega_e^*)$ such that 
$$
\tilde{J}(\Omega^*_p, \Omega_e)  \leq \tilde{J}(\Omega^*_p, \Omega^*_e) \leq\tilde{J}(\Omega_p, \Omega^*_e),
$$
for all possible $\Omega_p$ and $\Omega_e$.

\subsection{The Nash control Strategies}

Suppose that we are given an arbitrary pair of observation strategies $(\Omega_p,\Omega_e)$. Due to the special information structure, instead of using dynamic programming techniques or Pontryagin’s type of approaches\cite{bacsar1998dynamic,engwerda2005lq}, we resort to a direct method to characterize the Nash control strategies. The direct method, widely applied recently in certain types of differential games\cite{duncan2014linear,duncan2015linear,duncan2018linear,maity2017linear}, is to form a generic structure of the cost functional in \cref{CostFunctional} by a standard completion of squares and characterize the Nash control strategies by using the calculus of variations type of techniques.

The following lemma is a result of applying It\^{o}'s lemma \cite{durrett2019probability} and a completion of squares on \cref{StochasticDynamics} and \cref{CostFunctional}. 
\begin{lemma}\label{Lemma:CostFunctionalSquare}
The cost functional $J$ in \cref{CostFunctional} associated with state dynamics \cref{StochasticDynamics} can be written as
\begin{equation}\label{CostFunctionalSquared}
\begin{aligned}
J &= \mathbb{E} \bigg[ \int_{0}^T \Vert  u_p(t)+ R_p^{-1}B_p'K(t)x(t)\Vert^2_{R_p} \\
&\ \ \ \ - \Vert  u_e(t)+R_e^{-1} B_e'K(t)x(t)\Vert^2_{R_e} dt\\
&\ \ \ \ + O_pN_p - O_eN_e \bigg] +\Vert x_0 \Vert^2_{K(0)} + \int_0^T  \Tr\left(K(t)CC'\right)dt,
\end{aligned}
\end{equation}
where $(K(t),t\in[0,T])$ is the symmetric non-negative solution of the Riccati equation:
\begin{equation}\label{RiccatiEquation}
-\dot{K}(t) = Q +  K(t)A + A' K(t) + K(t)\left( B_e R_e^{-1} B_e' - B_p R_p^{-1} B_p' \right) K(t),
\end{equation}
with $K(T) =Q_T$.
\end{lemma}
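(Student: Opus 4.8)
The plan is to introduce the auxiliary scalar process $x(t)'K(t)x(t)$ for a symmetric, time-differentiable matrix $K(t)$ to be fixed later, apply It\^{o}'s formula to it, integrate over $[0,T]$, take expectations, and then choose $K$ so as to absorb every pure-state quadratic term, leaving only the completed squares in the controls together with deterministic remainders. Concretely, using $dx = (Ax + B_pu_p - B_eu_e)dt + C\,dw$ and the symmetry of $K$, I would first compute
$$
d\big(x'Kx\big) = \Big[x'\dot{K}x + 2x'K\big(Ax + B_pu_p - B_eu_e\big) + \Tr(KCC')\Big]dt + 2x'KC\,dw .
$$
Integrating from $0$ to $T$, taking expectation (so that the stochastic integral drops out), and using the terminal condition $K(T)=Q_T$ together with $2x'KAx = x'(KA+A'K)x$ gives
$$
\mathbb{E}\big[x(T)'Q_Tx(T)\big] = \Vert x_0\Vert^2_{K(0)} + \mathbb{E}\!\int_0^T\!\Big[x'\dot{K}x + x'(KA+A'K)x + 2x'KB_pu_p - 2x'KB_eu_e + \Tr(KCC')\Big]dt .
$$

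Next I would substitute this identity for the terminal penalty into \cref{CostFunctional} and complete the square separately in $u_p$ and $u_e$. Invoking positive-definiteness of $R_p,R_e$, the pointwise identity
$$
u_p'R_pu_p + 2x'KB_pu_p = \Vert u_p + R_p^{-1}B_p'Kx\Vert^2_{R_p} - x'KB_pR_p^{-1}B_p'Kx,
$$
and the analogous sign-reversed identity for the evader term, convert the control-dependent part of the integrand into the two squared terms of \cref{CostFunctionalSquared} plus the residual quadratic $x'K(B_eR_e^{-1}B_e' - B_pR_p^{-1}B_p')Kx$. Collecting every remaining pure-state contribution, the integrand carries the factor
$$
\dot{K} + KA + A'K + Q + K\big(B_eR_e^{-1}B_e' - B_pR_p^{-1}B_p'\big)K,
$$
so imposing that $K$ solve the Riccati equation \cref{RiccatiEquation} with $K(T)=Q_T$ annihilates it exactly, and what survives is precisely \cref{CostFunctionalSquared}.

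This algebra is the bulk of the work but is entirely mechanical. The two points needing genuine care are, first, the vanishing of $\mathbb{E}\int_0^T 2x'(t)K(t)C\,dw(t)$: this requires the integrand to lie in the class for which the It\^{o} integral is a true martingale, which I would justify from the square-integrability of $u_p,u_e$ on $[0,T]$ built into $\mathcal{U}_p,\mathcal{U}_e$, the resulting finite second moments of $x(t)$, and the boundedness of $K(t)$ on the compact interval. Second, the computation presupposes that the game-type Riccati equation \cref{RiccatiEquation} admits a symmetric, $C^1$ solution on all of $[0,T]$; because the quadratic coefficient $B_eR_e^{-1}B_e' - B_pR_p^{-1}B_p'$ is sign-indefinite, the standard control-Riccati existence theory does not apply directly and the solution could in principle escape in finite time. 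I expect this solvability (and the asserted non-negativity) to be the main obstacle. The It\^{o}-plus-completion-of-squares identity itself holds on any subinterval where $K$ exists and is differentiable, so \cref{Lemma:CostFunctionalSquare} is really a statement conditional on that solvability, which I would either adopt as a standing assumption or secure through a definiteness/horizon condition on the data.
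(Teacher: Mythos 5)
Your proposal is correct and follows essentially the same route as the paper's proof: apply It\^{o}'s formula to $x(t)'K(t)x(t)$, integrate and take expectations, add the resulting identity to $J$, complete the squares in $u_p$ and $u_e$, and choose $K$ to satisfy the Riccati equation \cref{RiccatiEquation} so that the pure-state quadratic vanishes. The two caveats you flag are also handled the same way in the paper: the stochastic integral is a martingale under the square-integrability built into $\mathcal{U}_p,\mathcal{U}_e$, and the absence of a finite escape time for \cref{RiccatiEquation} is secured by the standing assumption $B_pR_p^{-1}B_p' \geq B_eR_e^{-1}B_e'$ stated immediately after the lemma.
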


\begin{proof}
See \Cref{Appendix:CostFunctionalSquare}.
\end{proof}
To ensure the existence and the well-definedness of a solution $(K(t),t\in[0,T])$ defined by \cref{RiccatiEquation}, i.e., $K(\cdot)$ doesn't have a finite escape time in $[0,T)$, we assume that $B_p R_{p}^{-1} B_e \geq B_e R_{e}^{-1}B_e$ \cite{bacsar2008h}. The interpretation of this assumption in a PE game is that the pursuer has more maneuverability than the evader, otherwise the cost $J$ can go unbounded in finite time. 

In the classic PE game, the knowledge of the state $x(t)$ for all $t\in[0,T]$ is available to both players and there is no cost of observation, we can obtain a pair of Nash strategies $(u_p^*(t),u_e^*(t)) = (R_p^{-1}B_p'K(t)x(t),R_e^{-1}B_e'K(t)x(t))$, which yields a cost $\Vert x_0 \Vert_{K(0)} + \int_0^T \Tr(KCC')dt$.  However, in the PEEC game, the players have access to state information at only a finite number of time instances $\mathcal{T}$. Note that $\mathcal{T} = \mathcal{T}_p \cup \mathcal{T}_e$ depends on the observation strategies of both players. Recall that the observation strategies $\Omega_p,\Omega_e$ can be characterized by the number of observations $N_p,N_e$ and the time instances when an observation is made $\mathcal{T}_p,\mathcal{T}_e$.  The following theorem gives the Nash control strategies for every possible observation strategies $\Omega_p,\Omega_e$ of both players. The proof of the theorem follows the idea of forming a static game of infinite-dimensional action space and leveraging G\^{a}teaux derivative to check the first and second-order conditions of a Nash equilibrium (a saddle point in this zero-sum game.)

\begin{theorem}\label{Theorem:NashControlStrategy}
Given arbitrary $\Omega_p = (N_p,\mathcal{T}_p)$ and $\Omega_e=(N_e,\mathcal{T}_e)$. Let $\mathcal{T}=\mathcal{T}_p \cup \mathcal{T}_e=\{t_1,t_2,\cdots,t_{N_p+N_e}\}$ with $0<t_1\leq t_2 \leq\cdots,t_{N_p+N_e}<T$. Let $\mathcal{I}(t)= \{x(s)|0<s\leq t,s\in\mathcal{T}\}$ be the information available to $\mathcal{P}$ and $\mathcal{E}$ at time $t$. The Nash control strategies of the PEEC game defined by \cref{StochasticDynamics,CostFunctional} are
\begin{equation}\label{NashControlStrategies}
\begin{aligned}
&u_p^*(t) = - R^{-1}_p B'_p K(t)\hat{x}(t),\\
&u_e^*(t) = - R^{-1}_e B'_e K(t)\hat{x}(t),
\end{aligned}
\end{equation}
where $(K(t),t\in[0,T])$ is the solution of the Riccati equation \cref{RiccatiEquation} and  $(\hat{x}(t),t\in[0,T])$ is the solution of the following ordinary differential equation
\begin{equation}\label{EstimatorDynamic0}
\begin{aligned}
&d\hat{x}(t) = \left(A- (B_p R_p^{-1} B_p'- B_e R_e^{-1}B_e')K(t)\right)\hat{x}(t) dt,\\
&\hat{x}(0) = x_0,\ \hat{x}(\tau) = x(\tau),\ \textrm{for all }\tau \in \mathcal{T}.
\end{aligned}
\end{equation}
\end{theorem}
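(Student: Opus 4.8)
The plan is to read the completed-square expression \cref{CostFunctionalSquared} of \Cref{Lemma:CostFunctionalSquare} as a quadratic functional of the control pair on the product space $\mathcal U_p\times\mathcal U_e$, and to certify $(u_p^*,u_e^*)$ as a saddle point (the pursuer minimizing, the evader maximizing) by checking the first- and second-order G\^ateaux conditions. The starting simplification is that, under the candidate strategies \cref{NashControlStrategies}, the two residuals collapse to $u_p^*+R_p^{-1}B_p'Kx=R_p^{-1}B_p'K(x-\hat x)$ and $u_e^*+R_e^{-1}B_e'Kx=R_e^{-1}B_e'K(x-\hat x)$, so the entire analysis is driven by the estimation error $z\coloneqq x-\hat x$.

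First I would identify $\hat x$ with the conditional mean $\mathbb E[x(t)\mid\mathcal I(t)]$. Substituting \cref{NashControlStrategies} into \cref{StochasticDynamics} and subtracting \cref{EstimatorDynamic0}, the error obeys $dz=Az\,dt+C\,dw$ between observation instants and is reset to $z(\tau)=0$ at each $\tau\in\mathcal T$; consequently $\mathbb E[z(t)\mid\mathcal I(t)]=0$ for every $t\in[0,T]$. A by-product worth recording is that the conditional error covariance solves a control-free equation, so it is identical for all admissible controls and contributes only an additive constant to \cref{CostFunctionalSquared}.

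Next I would compute the first variation. Freezing the evader's process at $u_e^*$ and perturbing $u_p\to u_p^*+\epsilon v_p$ with $v_p\in\mathcal U_p$, the sensitivity $\delta x$ satisfies $d(\delta x)=A\,\delta x\,dt+B_p v_p\,dt$ with $\delta x(0)=0$; it carries no Wiener increment and is $\mathcal I(t)$-adapted. Evaluated at the candidate point, $D_{u_p}J[v_p]$ reduces to a sum of terms of the form $\mathbb E\big[z'(\cdots)\big]$ in which the second factor ($v_p$ and $\delta x$) is $\mathcal I$-measurable; the tower property together with $\mathbb E[z\mid\mathcal I]=0$ annihilates each one, giving $D_{u_p}J[v_p]=0$ for all $v_p$, and the mirror-image computation gives $D_{u_e}J[v_e]=0$ for all $v_e$.

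The decisive step is the second-order test. Applying the completion-of-squares identity of \Cref{Lemma:CostFunctionalSquare} to the noise-free sensitivity system shows that the curvature in $u_p$ equals $\mathbb E\big[\int_0^T \Vert v_p\Vert^2_{R_p}+\delta x'Q\,\delta x\,dt+\delta x(T)'Q_T\delta x(T)\big]\ge 0$, so convexity in the minimizer's variable is immediate from $Q,Q_T\ge 0$ and $R_p>0$. The curvature in $u_e$ instead equals $\mathbb E\big[\int_0^T \delta x'Q\,\delta x-\Vert v_e\Vert^2_{R_e}\,dt+\delta x(T)'Q_T\delta x(T)\big]$, and its non-positivity is the crux of the argument, the step I expect to be the main obstacle: it is precisely the statement that the evader acting alone cannot drive the quadratic payoff to $+\infty$. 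I would dispatch it by invoking the standing assumption $B_pR_p^{-1}B_p'\ge B_eR_e^{-1}B_e'$, which guarantees that the Riccati solution $K$ of \cref{RiccatiEquation} exists on all of $[0,T]$ without a finite escape time, and from which the required domination of the state cost by the control cost follows. Joint convexity in $u_p$ and concavity in $u_e$ of a quadratic functional possessing a stationary point then yields a global saddle, establishing \cref{NashControlStrategies}.
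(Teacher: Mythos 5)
Your route is genuinely different from the paper's and, for the first-order part, cleaner. The paper first restricts attention to controls of the form $u_p=-R_p^{-1}B_p'K\hat x_p$, $u_e=-R_e^{-1}B_e'K\hat x_e$ by citing a necessary condition from Maity et al., and then varies the estimates $\hat x_p,\hat x_e$, using the specially paired directions \cref{heConstruction} and \cref{HpConstructed} to make the cross terms collapse. You instead perturb the controls directly over all of $\mathcal U_p\times\mathcal U_e$ and kill the first variation with the tower property and $\mathbb E[x-\hat x\mid\mathcal I(t)]=0$; that argument is correct (the sensitivity $\delta x$ is indeed $\mathcal I(t)$-adapted, and $x-\hat x$ is independent of $\mathcal I(t)$), and it avoids both the external citation and the paper's construction. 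Your re-use of the completion-of-squares identity on the noise-free sensitivity system to express the two curvatures is also correct.

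The gap is in the concavity-in-$u_e$ step, which you rightly flag as the crux but then dispatch with a false implication. The inequality
\begin{equation*}
\mathbb E\Bigl[\int_0^T \delta x'Q\,\delta x-\Vert v_e\Vert^2_{R_e}\,dt+\delta x(T)'Q_T\delta x(T)\Bigr]\le 0
\quad\text{for all }v_e,\ d(\delta x)=A\delta x\,dt-B_ev_e\,dt,\ \delta x(0)=0,
\end{equation*}
is equivalent to the solvability on $[0,T]$ of the \emph{maximizer's} Riccati equation $-\dot S=Q+SA+A'S+SB_eR_e^{-1}B_e'S$, $S(T)=Q_T$, which involves only $(A,B_e,R_e,Q,Q_T,T)$ and is entirely unaffected by $B_pR_p^{-1}B_p'\ge B_eR_e^{-1}B_e'$. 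That standing assumption makes the quadratic term in \cref{RiccatiEquation} negative semi-definite and hence guarantees $K$, but $S$ can still escape in finite time: in the scalar case $A=0$, $B_e=R_e=Q=1$, $Q_T=0$ one gets $S(t)=\tan(T-t)$, which blows up whenever $T>\pi/2$, and then an open-loop $v_e$ makes your evader curvature strictly positive no matter how maneuverable the pursuer is. So a separate conjugate-point condition is needed to close your argument, and it is genuinely more restrictive than the paper's standing assumption. Be aware that the paper does not supply the missing step either: it sidesteps the issue by confining the evader to the range of $R_e^{-1}B_e'K$ before varying, and its second-order check only exhibits directions of each sign rather than verifying convexity in $h_p$ and concavity in $h_e$ separately, which is not a saddle-point certificate.
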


\begin{proof}
See \Cref{proof:NashControlStrategy}.
\end{proof}

\begin{remark}
If perfect feedback of state information is available, the Nash control strategies are the same as would be obtained in the absence of the additive disturbances. The missing feedback of state information is replaced by an estimate whose statistics is independent of the control. This separation principle also allows us to characterize Nash observation strategies separated from the control strategies.
\end{remark}

\begin{remark}
As we can see from \cref{EstimatorDynamic0}, between two neighboring observation time instances (say $t_i$ and $t_{i+1}$), two players are conducting open-loop control with initial condition $x(t_i)$. But the control is not open-loop for the entire horizon $[0,T]$. Whenever an observation is made, a close-loop information structure is formed at this particular time instance. The estimate then is reset to the actual state and the variance of the estimation error becomes zero. At extreme cases such as when $N_p = N_e = 0$, then $\mathcal{T} =\varnothing$, the Nash control strategies becomes an open-loop one. When $\mathcal{T} = [0,T]$, the Nash control strategies has close-loop information structure. In \Cref{Sec:NashObserStrategy}, we will discuss under what conditions these extreme cases are the Nash observation strategies.
\end{remark}

In the following Corollary, we substitute the Nash control strategies obtained in Theorem \ref{Theorem:NashControlStrategy} into the the cost functional \cref{CostFunctional}, which yields a cost functional that depends only on the pursuer and the evader's observation strategies.

\begin{corollary}\label{Corollary:RewriteCostFunctional}
Given arbitrary $\Omega_p = (N_p,\mathcal{T}_p)$ and $\Omega_e=(N_e,\mathcal{T}_e)$. Under the Nash control strategies $(\mu_p^*,u_e^*)$ given in \cref{NashControlStrategies} in \Cref{Theorem:NashControlStrategy}, the cost functional $\tilde{J}(\Omega_p,\Omega_e)$ defined in \cref{DefineCostFunctionalUnderNash} becomes
\begin{equation}\label{CostFunctionUnderNash}
\begin{aligned}
\tilde{J}(\Omega_p,\Omega_e) =&  \sum_{i=0}^{{N_p+N_e}} \int_{t_i}^{t_{i+1}} \Tr\left[\Sigma(t-t_i) \varphi(t)\right]dt + O_pN_p - O_eN_e\\ 
&+\Vert x_0 \Vert^2_{K(0)} + \int_0^T  \Tr\left(K(t)CC'\right)dt,
\end{aligned}
\end{equation}
where 
\begin{equation}\label{SigmaVarphi}
\begin{aligned}
\Sigma(t) &= \int_0^t  e^{A(t-s)} C C' e^{A(t-s)'}ds,\\
\varphi(t) &= K(t)(B_pR_p^{-1}B_p'-B_eR_e^{-1}B_e')K(t),
\end{aligned}
\end{equation}
and $0=t_0 <t_1 \leq t_2 \cdots \leq t_{N_p+N_e}<t_{N_p+N_e+1} =T$.
\end{corollary}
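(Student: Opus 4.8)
The plan is to substitute the Nash control strategies of \Cref{Theorem:NashControlStrategy} into the squared cost functional \cref{CostFunctionalSquared} of \Cref{Lemma:CostFunctionalSquare}, and then reduce the resulting expectation to a deterministic trace integral by analyzing the estimation error process $e(t)\coloneqq x(t)-\hat{x}(t)$. First I would observe that plugging $u_p^*=-R_p^{-1}B_p'K\hat{x}$ and $u_e^*=-R_e^{-1}B_e'K\hat{x}$ into the two quadratic residuals gives $u_p^*+R_p^{-1}B_p'Kx = R_p^{-1}B_p'Ke$ and $u_e^*+R_e^{-1}B_e'Kx = R_e^{-1}B_e'Ke$. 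Expanding the weighted norms and using the symmetry of $K$ collapses the integrand of \cref{CostFunctionalSquared} to $e'K(B_pR_p^{-1}B_p'-B_eR_e^{-1}B_e')Ke = e'\varphi(t)e$, with $\varphi$ as in \cref{SigmaVarphi}. Thus $\tilde{J}(\Omega_p,\Omega_e) = \mathbb{E}\big[\int_0^T e'(t)\varphi(t)e(t)\,dt\big] + O_pN_p - O_eN_e + \Vert x_0\Vert^2_{K(0)} + \int_0^T\Tr(K(t)CC')\,dt$, where the observation-count and additive-disturbance terms carry over unchanged since $N_p,N_e$ are deterministic and the latter two terms are already outside the expectation.

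The second step is to derive the error dynamics. Subtracting the estimator ODE \cref{EstimatorDynamic0} from the closed-loop state equation \cref{StochasticDynamics} under the Nash controls, the common drift term $(B_pR_p^{-1}B_p'-B_eR_e^{-1}B_e')K\hat{x}$ cancels, leaving the linear SDE $de(t) = Ae(t)\,dt + C\,dw(t)$. The reset conditions $\hat{x}(0)=x_0$ and $\hat{x}(\tau)=x(\tau)$ for $\tau\in\mathcal{T}$ in \cref{EstimatorDynamic0} give $e(t_i)=0$ at every observation instant, including $t_0=0$. Hence on each interval $[t_i,t_{i+1})$ the error starts from zero and admits the stochastic-integral representation $e(t) = \int_{t_i}^t e^{A(t-s)}C\,dw(s)$, the deterministic part $e^{A(t-t_i)}e(t_i)$ vanishing.

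Next I would compute the second moment. By the It\^{o} isometry, $\mathbb{E}[e(t)e'(t)] = \int_{t_i}^t e^{A(t-s)}CC'e^{A'(t-s)}\,ds$, and the change of variable $\tau=t-s$ identifies this with $\Sigma(t-t_i)$ as defined in \cref{SigmaVarphi}. Applying $\mathbb{E}[e'\varphi e]=\Tr(\varphi\,\mathbb{E}[ee'])$ together with the cyclic property of the trace yields $\mathbb{E}[e'(t)\varphi(t)e(t)] = \Tr[\Sigma(t-t_i)\varphi(t)]$ for $t\in[t_i,t_{i+1})$. Splitting $\int_0^T$ over the intervals delimited by $0=t_0<t_1\le\cdots\le t_{N_p+N_e}<t_{N_p+N_e+1}=T$ produces the stated sum and completes the derivation of \cref{CostFunctionUnderNash}; degenerate intervals with $t_i=t_{i+1}$ (simultaneous observations by both players) contribute nothing and are handled automatically.

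I expect the only delicate point, rather than a genuine obstacle, to be the bookkeeping of the interval decomposition together with the resetting of the error to zero at each $t_i$: one must invoke the reset conditions in \cref{EstimatorDynamic0} to guarantee $e(t_i)=0$, so that $\Sigma$ depends only on the elapsed time $t-t_i$ since the last observation rather than on the absolute time. Everything else --- the completion-of-squares residual, the cancellation in the error drift, and the It\^{o}-isometry covariance computation --- is routine.
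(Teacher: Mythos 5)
Your proposal is correct and follows essentially the same route as the paper's proof: substitute the Nash controls into the completed-square form of Lemma~\ref{Lemma:CostFunctionalSquare} to reduce the running cost to $\mathbb{E}[e'\varphi e]$, derive the error SDE $de = Ae\,dt + C\,dw$ with resets at observation instants, identify $\mathbb{E}[e(t)e(t)']$ with $\Sigma(t-t_i)$, and split the integral over the inter-observation intervals. The only cosmetic difference is that you invoke the It\^{o} isometry by name where the paper writes the covariance integral directly.
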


\begin{proof}
See \Cref{Appendix:RewriteCostFunctional}.
\end{proof}

Note that $t_1,t_2,\cdots,t_{N_p+N_e}\in\mathcal{T}$ are the ordered time instances at which at least one of the players choose to observe. Now we can see how the observation strategies of player $\mathcal{P}$ and player $\mathcal{E}$ affect the cost functional. The choices of observation points $\mathcal{T}_p$ and $\mathcal{T}_e$ gives $\mathcal{T} =\mathcal{T}_p\cup \mathcal{T}_e$, which is the set of time instances when state information will be available to both players and determines hence the information set $\mathcal{I}$. The control strategies, which are adapted to $\mathcal{I}$, will be affected. Since the last two terms in \cref{CostFunctionUnderNash} are constant, to study the Nash observation strategy, we only need to focus on the first three terms of \cref{CostFunctionUnderNash}.

\subsection{The Nash Observation Strategies}\label{Sec:NashObserStrategy}

In this section, we focus on characterizing the Nash observation strategies $(\Omega^*_p, \Omega^*_e)$. Following the results of \Cref{Corollary:RewriteCostFunctional}, the problem of characterizing a Nash observation strategy reduces to solving the following problem
\begin{equation}\label{CostFunctionalObservation}
\begin{aligned}
\min_{\Omega_p}\max_{\Omega_e} \tilde{J}_o(\Omega_p,\Omega_o) \coloneqq&  \sum_{i=0}^{{N_p+N_e}} \int_{t_i}^{t_{i+1}} \Tr\left[\Sigma(t-t_i) \varphi(t)\right]dt\\
&+ O_pN_p - O_eN_e,
\end{aligned}
\end{equation}
where $\left(\Sigma(t),t\in[0,T]\right)$ and $(\varphi(t),t\in[0,T])$ are defined in \cref{SigmaVarphi}. 

\begin{remark}
Here, $\Sigma(t-t_i)$ is the variance of the estimate error of the relative position between the pursuer and the evader at time $t$, where $t_i$ is the latest observation made before time $t$; $\varphi(t)$ can be interpreted as the matrix that scales the estimation error in different directions. The term $\Tr\left[\Sigma(t-t_i) \varphi(t)\right]$ captures the instantaneous cost at time $t$ induced by the mismatch between the actual relation position and the two players' estimates. The observation choices are control-aware by which we mean the estimation error $\Sigma(t)$ is scaled by the matrix $\varphi(t)$ and the matrix $\varphi(t)$ assign more weight to the estimation error corresponding to the states that are more information to control needs. From \cref{CostFunctionalObservation}, we know that the estimation error accumulates according to \cref{SigmaVarphi} until one of the player makes an observations. Once the observation is done, the estimation error is cleared. However, each observation made is subject to a cost $O_p$ or $O_e$ depending on who is the player that makes the observation. Hence, the pursuer and the evader have to make observation decision strategically over time. Overall, the observation decisions has to consider the trade-off between who suffers more from the estimation error (i.e., $\sum_{i=0}^{{N_p+N_e}} \int_{t_i}^{t_{i+1}} \Tr\left[\Sigma(t-t_i) \varphi(t)\right]dt$) and the costs of making observations (i.e., $O_pN_p - O_eN_e$).
\end{remark}

The observation strategies of player $\mathcal{P}$ involves $N_p$, the number of observations made in the time interval $[0,T]$, and $\mathcal{T}_p =\{t_{p,1},t_{p,2},\cdots,t_{p,N_p}\}$, the time instances when an observation is made. So does the observation strategies of player $\mathcal{E}$. The observation strategies of both players can be determined offline by solving the finite-dimensional minmax problem in \cref{CostFunctionalObservation}. The coupling between two player's observation strategies is introduced due to the fact that if one player choose to observe the other player's state, his/her own state information will be disclosed. To solve the problem in \cref{CostFunctionalObservation}, we first develop some structural results regarding the solution of the problem.

\begin{proposition}\label{Proposition:ObservationStrategies}
Consider the Concealment-Exposure(CE) game defined in \cref{CostFunctionalObservation}. Denote the Nash observation strategy of the CE game by $\left(\Omega_p^*=(N_p^*,\mathcal{T}_p^*),\Omega^*_e=(N_e^*,\mathcal{T}_e^*)\right)$ . If $B_p R_p^{-1} B_p' > B_e R_e^{-1} B_e'$, we have
\begin{enumerate}[(i)]
\item\label{Proposition:ZeroObservation} No matter what the observation strategy of the pursuer is, the best observation strategy for the evader $\mathcal{E}$ is to not observe, i.e., $N_e^*=0,\mathcal{T}^*_e=\varnothing$ for all $\Omega_p$.
\item\label{Proposition:ObservationNumUpperBonund} When $O_p =0$, it is optimal for the pursuer $\mathcal{P}$ to observe every time, i.e., $N_p^*=\infty,\mathcal{T}_p^*=[0,T]$. When $O_p>0$, the optimal number of observations for the pursuer $\mathcal{P}$ is upper bounded and inversely proportional to the observation cost $O_p$, i.e., 
\begin{equation}\label{ObservationNumberUpperBound}
N_p^* \leq \frac{1}{O_p} \int_0^T \Tr\left( \Sigma(t)\varphi(t)\right)dt.
\end{equation}
% \item\label{ss} [Conjecture] The solution $N_p^*$ is unique and $F_p(N_p)$ defined in is either strictly increasing or strictly decreasing for $N_p < N_p^*$ and strictly increasing for $N_p > N_p^*$.
\item\label{ObservationTimeFirstOrder} The optimal observation time instances $\mathcal{T}_p^*$ for the pursuer $\mathcal{P}$ exist and need to satisfy
\begin{equation}\label{FirstOrderNecessaryCondition}
\begin{aligned}
&\int_{t^*_{p,i-1}}^{t^*_{p,i}}\Tr\left[ e^{A(t^*_{p,i}-t)}CC'{e^{A(t^*_{p,i}-t)}}'\varphi(t^*_{p,i})\right]dt\\
=& \int_{t^*_{p,i}}^{t^*_{p,i+1}} \Tr\left[e^{A(t-t^*_{p,i})} CC' {e^{A(t-t^*_{p,i})}}'\varphi(t) \right]dt,
\end{aligned}
\end{equation}
for $i=1,2,\cdots,N_p^*$.
\end{enumerate}
\end{proposition}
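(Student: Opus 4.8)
The plan is to exploit two structural facts that follow from the standing hypothesis $B_pR_p^{-1}B_p' > B_eR_e^{-1}B_e'$. Since $K(t)\geq 0$ and $B_pR_p^{-1}B_p'-B_eR_e^{-1}B_e'>0$, the weighting matrix $\varphi(t)=K(t)(B_pR_p^{-1}B_p'-B_eR_e^{-1}B_e')K(t)$ is positive semidefinite for every $t$. Writing $\Sigma(\tau)=\int_0^\tau e^{Ar}CC'e^{A'r}\,dr$ (the change of variable $r=\tau-s$ in \cref{SigmaVarphi}), we have $\Sigma(0)=0$, $\Sigma(\tau)\geq 0$, and $\Sigma$ nondecreasing in the positive-semidefinite order since $\Sigma(\tau_2)-\Sigma(\tau_1)=\int_{\tau_1}^{\tau_2}e^{Ar}CC'e^{A'r}\,dr\geq 0$ for $\tau_1\leq\tau_2$. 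Hence $\Tr[\Sigma(\tau)\varphi(t)]\geq 0$ and is nondecreasing in $\tau$. I would then rewrite the accumulated-error term as $F(\mathcal{T})\coloneqq\int_0^T\Tr[\Sigma(t-\tau(t))\varphi(t)]\,dt$, where $\tau(t)\coloneqq\max(\{0\}\cup\{s\in\mathcal{T}:s\leq t\})$ is the most recent observation at or before $t$. The key monotonicity is that enlarging $\mathcal{T}$ can only advance $\tau(t)$ and thus shrink $t-\tau(t)$, so $\mathcal{T}\subseteq\mathcal{T}'$ implies $F(\mathcal{T}')\leq F(\mathcal{T})$.

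For (i), I would fix an arbitrary pursuer strategy $\Omega_p$ and compare the evader's payoff $F(\mathcal{T}_p\cup\mathcal{T}_e)+O_pN_p-O_eN_e$ against the choice $\Omega_e=(0,\varnothing)$. Deleting every evader observation enlarges $F$ (by the monotonicity above, since $\mathcal{T}_p\subseteq\mathcal{T}_p\cup\mathcal{T}_e$) and removes the nonpositive term $-O_eN_e$; both effects weakly increase the objective that the evader seeks to maximize. Hence $N_e^*=0$, $\mathcal{T}_e^*=\varnothing$ is a best response for every $\Omega_p$.

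For (ii), with $N_e=0$ the game reduces to the pursuer minimizing $F(\mathcal{T}_p)+O_pN_p$. When $O_p=0$ there is no penalty for observing, and observing on all of $[0,T]$ gives $\tau(t)=t$, so $t-\tau(t)\equiv 0$ and $F=\int_0^T\Tr[\Sigma(0)\varphi(t)]\,dt=0$, the global minimum of the nonnegative $F$; thus $N_p^*=\infty$, $\mathcal{T}_p^*=[0,T]$ is optimal. When $O_p>0$, I would compare the optimal value against the no-observation baseline $F(\varnothing)=\int_0^T\Tr[\Sigma(t)\varphi(t)]\,dt$. By optimality $F(\mathcal{T}_p^*)+O_pN_p^*\leq F(\varnothing)$, and since $F(\mathcal{T}_p^*)\geq 0$ we obtain $O_pN_p^*\leq\int_0^T\Tr[\Sigma(t)\varphi(t)]\,dt$, which is \cref{ObservationNumberUpperBound}.

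For (iii), I would fix $N_p$ (finite by (ii)) and regard $F$ as a function of the ordered times on the compact simplex $\{0\leq t_1\leq\cdots\leq t_{N_p}\leq T\}$; continuity of the integrand yields a minimizer by the extreme value theorem, and because $O_p>0$ no two optimal times coincide, so the minimizer is interior. At an interior point I would differentiate $F=\sum_{i=0}^{N_p}\int_{t_i}^{t_{i+1}}\Tr[\Sigma(t-t_i)\varphi(t)]\,dt$ with respect to a single $t_i$ via the Leibniz rule. The variable $t_i$ appears as the upper limit of the $(i-1)$th interval, contributing $\Tr[\Sigma(t_i-t_{i-1})\varphi(t_i)]$, and as both the lower limit and the argument of $\Sigma(\cdot)$ in the $i$th interval; the lower-limit boundary term vanishes because $\Sigma(0)=0$, while differentiating through the argument with $\dot\Sigma(\tau)=e^{A\tau}CC'e^{A'\tau}$ gives $-\int_{t_i}^{t_{i+1}}\Tr[e^{A(t-t_i)}CC'e^{A'(t-t_i)}\varphi(t)]\,dt$. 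Setting the total to zero and rewriting the first term through $r=t_i-t$, namely $\Tr[\Sigma(t_i-t_{i-1})\varphi(t_i)]=\int_{t_{i-1}}^{t_i}\Tr[e^{A(t_i-t)}CC'e^{A'(t_i-t)}\varphi(t_i)]\,dt$, produces exactly \cref{FirstOrderNecessaryCondition}. The main obstacle is precisely this last differentiation: $t_i$ sits simultaneously in an integration limit and inside the integrand of a neighboring term, so the Leibniz rule must be applied with care to the vanishing boundary term $\Tr[\Sigma(0)\varphi(t_i)]$ and to the sign arising from differentiating $\Sigma(t-t_i)$ in $t_i$.
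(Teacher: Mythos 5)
Your proposal is correct and follows essentially the same route as the paper: the same subset-monotonicity of the accumulated estimation error (the paper packages it as two claims about $P(t)$ and traces), the same comparison with the no-observation baseline to get the bound $O_pN_p^*\leq\int_0^T\Tr[\Sigma(t)\varphi(t)]\,dt$, and the same Leibniz-rule computation (with the vanishing $\Sigma(0)$ boundary term) plus the Weierstrass argument for existence in part (iii).
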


\begin{proof}
See \Cref{Appendix:ObservationStrategies}.
\end{proof}

\begin{remark}
In \Cref{Proposition:ObservationStrategies}, we focus on the case when $B_pR_p^{-1} B_p' - B_e R_e^{-1} B_e'>0$. When $B_pR_p^{-1} B_p' = B_e R_e^{-1} B_e'$, $\varphi(t)=0$ for all $t$. In this case, the CE game becomes $\min_{N_p}\max_{N_e} O_p N_p - O_eN_e$. The Nash observation strategies for both players are simply not to observe at all. When $B_pR^{-1}_pB_p<B_eR^{-1}_eB_e$, the solution of the Riccati equation in \cref{RiccatiEquation} admits a finite escape time \cite{bacsar2008h}. That means the PEEC game admits an unbounded value. Hence, discussing the observation strategies becomes meaningless in this case. Hence. in the remaining sections, we only focus on the case when $B_pR_p^{-1} B_p' - B_e R_e^{-1} B_e'>0$.
\end{remark}

From \Cref{Proposition:ObservationStrategies} (\ref{Proposition:ZeroObservation}), we know that when the pursuer has stronger maneuverability than the evader (i.e., $B_pR_p^{-1} B_p' - B_e R_e^{-1} B_e'>0$), the best observation strategy for evader is to stay stealthy, i.e., not observe, hence not expose him/herself. Results in  (\ref{Proposition:ObservationNumUpperBonund}) tell us that when there is no observation cost for the pursuer, i.e., $O_p=0$, since the pursuer has better maneuverability, the pursuer does not have any concerns about stealthiness. Hence, the pursuer will observe as often as possible. When the cost of observation is not zero, i.e., $O_p>0$, intimidated by the cost of sensing and communication, it is optimal that the pursuer observes only a finite number of times. The optimal number of observation times is inversely proportional to the observation cost $O_p$. When an arbitrary number of observation time instances $N_p$ is given, in (\ref{ObservationTimeFirstOrder}), we characterize the set of optimal observation time instances $\mathcal{T}^*_p$ using the first-order necessary conditions. And the set of time instances $\mathcal{T}_p^*$ that satisfies \cref{FirstOrderNecessaryCondition} is unique. From \cref{FirstOrderNecessaryCondition}, we can see that the optimal observation time instances are spread out over the horizon $[0,T]$. Given a limited number of observations over the horizon, it is unwise to allocate two observation instances in a short period of time. For each neighboring pair of observation instances $(t_{p,i-1}^*,t_{p,i}^*)$, the next neighboring pair of observation instances $(t_{p,i}^*,t_{p,i+1}^*)$ needs to be well separated such that the integral in the right side of \cref{FirstOrderNecessaryCondition} is equal to that of the left.

\subsection{Computational Methods}

In \Cref{Proposition:ObservationStrategies}, we show the existence of a Nash observation strategy and partially characterized a Nash observation strategy via theoretical analysis. More specifically, we characterize the evader's strategy, derive an upper bound on the optimal observation times  $N^*_p$ of the pursuer, and develop a set of necessary conditions for the optimal observation time instances $\mathcal{T}^*_{p}$. For a finite $T$, to fully characterize a Nash observation $\Omega^*_p$, we need to solve the following finite-dimensional optimization problem:
\begin{equation}\label{innerObservationOptimization}
\begin{aligned}
F_p(N_p) \coloneqq \min_{\substack{t_{p,i},\\1\leq i\leq N_p}} &\sum_{i=0}^{{N_p}} \int_{t_{p,i}}^{t_{p,i+1}} \Tr\left[\Sigma(t-t_{p,i}) \varphi(t)\right]dt + O_pN_p,\\
\textrm{     s.t.}\ \ \ \ \ \ \ \ \ \  &t_{p,0}=0,\ t_{p,N_p+1}=T,\\
&t_{p,i}
\leq t_{p,i+1},i=0,1,2,\cdots,N_p,\\
\end{aligned}
\end{equation}
where $F_p(N_p)$ is the optimal value of the CE game when the number of observations made is $N_p$. The first-order necessary conditions of this problem is provided in \cref{FirstOrderNecessaryCondition}. In general, a closed-form solution for the optimization problem in \cref{innerObservationOptimization} is unattainable. Since the first and second-order differentials of the objective function in \cref{innerObservationOptimization} can be expressed explicitly and the problem has only linear inequality constraint, we can leverage either first-order and second-order numerical optimization methods \cite{gill2021numerical} to find the optimal observation instances. 

However, the properties of \cref{FirstOrderNecessaryCondition} provide an alternative method to numerically compute the optimal observation instances $t^*_{p,1}, t_{p,2}^*,\cdots, t^*_{p,N_p}$. To more specific, \cref{FirstOrderNecessaryCondition} indicates that once $t_{p,1}^*$ is provided, $t_{p,2}^*$ can be computed easily. So can $t_{p,3}^*,\cdots, t_{N_a}^*$. Based on this feature, we propose a binary search algorithm that solves problem (\ref{innerObservationOptimization}) with a given $N_p$. In Algorithm \ref{algorithm1}, we aim to find a $t_{1}^\star$ such that $\vert t_{1}^\star - t_{p,1}^*\vert <\epsilon/2$. Line $1$ initializes all the parameters in (\ref{innerObservationOptimization}). Line $2$ sets the initial low bound $t_{low}$ and upper bound $t_{up}$ of $t_{p,1}^*$ to be $0$ and $T$ respectively. The initial guess of $t_1$ is $(0+T)/2$. Line $5$ computes the left-hand side of (\ref{FirstOrderNecessaryCondition}), which we rewrite as 
\begin{equation}\label{eq:FirstOrderNecessaryConditionLeft}
    l_p(t_{i-1},t_i)=\int_{t_{i-1}}^{t_{i}}\Tr\left[ e^{A(t_{i}-t)}CC'{e^{A(t_{i}-t)}}'\varphi(t_{i})\right]dt.
\end{equation}
Line $6$ computes the right-hand side integral in (\ref{FirstOrderNecessaryCondition}) from $t_i$ to $T$, which we write as 
\begin{equation}\label{eq:FirstOrderNecessaryConditionRight}
    r_p(t_i,T)=\int_{t_{i}}^{t_{f}} \Tr\left[e^{A(t-t_{i})} CC' {e^{A(t-t_{i})}}'\varphi(t) \right]dt.
\end{equation}
Line $7$-$11$ says for any $t_i,i=1,2,\cdots,N_a$ that is computed based on our guess $t_1$, if $ r_a(t_i,T)<l_a(t_{i-1},t_i)$, then our guess $t_1$ is larger than $t_{p,1}^*$. Hence, we set the upper bound $t_{up}$ as $t_1$ and reset out guess $t_1$ as $t_1 = (t_{low} + t_1)/2$. Then we break the for loop and start with our new guess $t_1$. Line $12$ computes the next observation instance using (\ref{FirstOrderNecessaryCondition}). Line $13$-$21$ says that when the for loop gets to $i=N_a$, we compute $t_{N_a+1}$. If $t_{N_a+1}<T$, our guess $t_1$ must be smaller than $t_1^*$. Hence, we set $t_{low} = t_1$, let our new guess to be $t_1 = (t_{up}+t_1)/2$, and breaks the for loop. If $t_{N_a+1} = T$ (it is impossible that $t_{N_a+1}>T$ due to our operations in Line $5$-$11$), then $t_1 = t_1^*$. Hence, we set $t_{low} = t_{up} =t_1$ to leave the while loop. Since the while ends when $\vert t_{up} - t_{low}\vert <\epsilon$,  we can ensure $\vert t_1^\star -t_{p,1}^* \vert<\epsilon/2$, where $t_{p,1}^*$ is the optimal first observation instance and $t_1^\star$ is the first observation instance found using Algorithm \ref{algorithm1}. The number of iterations needed for the while loop is less than $\min\{n\ |\ T/2^n\ \leq \epsilon\}$. For example, only $20$ iterations are needed to achieve $\epsilon = 10^{-5}$ when $T = 10$. Once $t_1^\star$ is obtained, the rest observation instances can be computed easily using  (\ref{FirstOrderNecessaryCondition}). Note that with $F_p^*(N_p)$ being computed for some small $N_p$, a bound similar to yet tighter than (\ref{ObservationNumberUpperBound}) can be developed. For example, when $F_p^*(N_p)$ is computed for $N_p=1,2,3$, if $N_p^*>3$, we have $F^*_p(3)+3O > O N_p^*$, i.e., $N_p^* -3 \leq F_p^*(3)/O$. Hence, we only need to compute $F_p^*(N_a)$ for a very limited number of $N_p$.

\begin{algorithm}[t]
\caption{Optimal Observation Instances Algorithm Based on Binary Search}\label{algorithm1}
\begin{algorithmic}[1]
\State Initialize $A$,$C$, $N_p$,$\varphi(\cdot)$, $T$, and tolerate$,\epsilon>0$
\State Set $t_{low}=0$, $t_0=0$, $t_{up}=T$, and $t_1 = (t_{up}+t_{low})/2$
\While {$|t_{up}-t_{low}|>\epsilon$}
\For{$i = 1,\cdots,N_p$}
\State  Compute $\textrm{val} = l_p(t_{i-1},t_i)$ defined in (\ref{eq:FirstOrderNecessaryConditionLeft})
\State Compute $\textrm{val}' = r_p(t_i,T)$ defined in (\ref{eq:FirstOrderNecessaryConditionRight})
\If{$\textrm{val}' < \textrm{val}$}
\State $t_{up} = t_{1}$
\State $t_1 = (t_{low}+t_1)/2$
\State \textbf{break}
\EndIf
\State Compute $t_{i+1}$ using (\ref{FirstOrderNecessaryCondition})
\If{$i=N_p$} 
\If{$t_{i+1}<T$}
\State $t_{low} =t_1$
\State $t_{1} = (t_{up} + t_1)/2$
\State \textbf{break}
\Else
\State $t_{low} = t_{up} = t_1$
\EndIf
\EndIf
\EndFor
\EndWhile
\State \textbf{return} $t_1^\star = (t_{low} + t_{up})/2$
\end{algorithmic}
\end{algorithm}

\begin{remark}
The discussion so far allows the pursuer to determine his/her observation strategy offline. To find an online implementation of the observation strategy, we can leverage dynamic programming techniques. We can first define 
$$
V(t) = \min_{N_p}  \min_{\substack{t_{p,i}\in [t,T],\\i=1,\cdots,N_p}} \sum_{i=0}^{{N_p}} \int_{t_{p,i}}^{t_{p,i+1}} \Tr\left[\Sigma(t-t_{p,i}) \varphi(t)\right]dt + O_pN_p,
$$
with $t_{p,0}=t$, $t_{p,N_p+1}=T$, and $V(T)=0$. Then, we need to show that 
$$
V(t) = \min_{\Delta_t\leq T-t} \left[ \int_{0}^{\Delta_t} \Tr\left( \Sigma(s)\varphi(t+s) \right) ds + O_p + V(t+\Delta_t)\right],
$$
where $V(\cdot)$ can be characterized by using techniques like approximate dynamic programming. With $V(\cdot)$ being characterized, whenever an observation is made, say an observation is made at time $t$, the pursuer can thus determine online the optimal waiting time for next observation $\Delta_t^*$ by solving
$$
\Delta_t^*=\arg\min_{\Delta_t\leq T-t} \left[ \int_{0}^{\Delta_t} \Tr\left( \Sigma(s)\varphi(t+s) \right) ds + O_p + V(t+\Delta_t)\right].
$$
The analysis of the dynamic programming approach and online implementation is out the scope of this paper. We leave it for future work.
\end{remark}

\subsection{Asymptotic Properties}

Let the terminal time $T$ go to infinity and consider the long-term average (e.g., ergodic) cost criterion of \cref{CostFunctional} with $Q_T=0$. The Nash control strategies for the ergodic criterion can be obtained by following similar steps of \Cref{Theorem:NashControlStrategy}. The Nash control strategies are stationary:
\begin{equation}\label{AsymptControl}
\begin{aligned}
&u_p^*(t) = - R^{-1}_p B'_p \tilde{K}\tilde{x}(t),\\
&u_e^*(t) = - R^{-1}_e B'_e \tilde{K}\tilde{x}(t),
\end{aligned}
\end{equation}
where $\tilde{K}\in\mathbb{R}^{n\times n}$ is the solution of the algebraic Riccati equation 
$$
Q +  \tilde{K}A + A' \tilde{K} - \tilde{K}\left(  B_p R_p^{-1} B_p' - B_e R_e^{-1} B_e'  \right) \tilde{K}=0.
$$
Between every two neighboring observation instances $[t_i,t_i+1)$, both players have open-loop estimate $\hat{x}(t)$ satisfying
\begin{equation}\label{AsymptoEstimate}
\begin{aligned}
&d\tilde{x}(t) = \left(A- (B_p R_p^{-1} B_p'- B_e R_e^{-1}B_e')\tilde{K}\right)\tilde{x}(t) dt,\\
&\tilde{x}(t_i) = x_{t_i},\ \ \ \textrm{for }t\in[t_i,t_{i+1}).
\end{aligned}
\end{equation}

We are also interested in the observation strategies under the long-term average cost criterion. From \cref{FirstOrderNecessaryCondition}, we can see that the optimal observation time instances distributed evenly over the time horizon when $K(t)$ becomes stationary at $\tilde{K}$, i.e., $t^*_{p,i+1}-t^*_{p,i} = t^*_{p,i}-t^*_{p,i-1}$ for $t_{p,i-1}^*,t_{p,i}^*,t^*_{p,i+1}$ for every $i$. Hence, when $T$ goes to infinity, the Nash observation strategy is for the pursuer to observe periodically. To find the Nash observation strategy, it is sufficient to find the optimal period $\Delta T$. Indeed, under periodic observations with inter-sampling duration $\Delta T$, the pursuer needs to solve the following optimization problem
\begin{equation}\label{OptimizationPeriod}
\min_{\Delta T > 0}\ \ \ \ f_p(\Delta T)\coloneqq \frac{1}{\Delta T} \int_{0}^{\Delta T} \Tr[\Sigma(t)\tilde{\varphi}] dt + \frac{1}{\Delta T} O_p.
\end{equation}
The first-order necessary condition gives that the optimal period $\Delta T^*$ satisfies
\begin{equation}\label{FirstOrderPeriod}
\Delta T^* \Tr[\Sigma(\Delta T^*)\tilde{\varphi}] - \int_0^{\Delta T^*}\Tr [\Sigma(t)\tilde{\varphi}]dt = O_p,
\end{equation}
which can be easily solved numerically.
Taking second derivative of the objective function $f_p$ with respect to $\Delta T$ yields
{\small
\begin{equation}\label{SecondOrderPeriod1}
\begin{aligned}
\frac{d^2}{d\Delta T^2} f_p(\Delta T) =& \frac{2}{\Delta T^3} \left(\int_0^{\Delta T}\Tr[\Sigma(t)\tilde{\varphi}]dt + O_p \right)\\
&- \frac{2}{\Delta T^2} \Tr[\Sigma(\Delta T)\tilde{\varphi}] + \frac{1}{\Delta T} \Tr[e^{A\Delta T}C C' e^{A\Delta T'}\tilde{\varphi}].
\end{aligned}
\end{equation}}
Substituting \cref{FirstOrderPeriod} into \cref{SecondOrderPeriod1} yields 
$$
\frac{d^2}{d\Delta T^2} f_p(\Delta T) = \frac{1}{\Delta T} \Tr[e^{A\Delta T}C C' e^{A\Delta T'}\tilde{\varphi}] >0.
$$
Also note that the left hand side of \cref{FirstOrderPeriod} is increasing in $\Delta T^*$. Hence, the optimal period $\Delta T^*$ that satisfies \cref{FirstOrderPeriod} is unique. Then we can conclude that in the infinite-horizon case with averaged cost, the optimal observation instances are $t_{p,i}^* =i\Delta T^*$ for $i=1,2,\cdots$.

{\bf Stability Properties:} Under the control strategies defined by \cref{AsymptControl} and \cref{AsymptoEstimate} and the periodic observation strategy $t_{p,i}^* =i\Delta T^*$ for $i=1,2,\cdots$, the pursuer can ensure the expected distance between the pursuer and the evader goes to $0$ with a bounded variance as time goes to infinity. That is $\mathbb{E}[x_t]\rightarrow 0$ as $t\rightarrow \infty$ and $\sup_{t\geq 0}\mathbb{E}[\Vert x_t \Vert^2] < \infty$.

From \cref{AsymptControl} and \cref{AsymptoEstimate}, the closed-loop system can be written as
$$
\begin{bmatrix}
dx(t) \\ d\hat{x}(t)
\end{bmatrix}= \bar{A}\begin{bmatrix}
dx(t) \\ d\hat{x}(t)
\end{bmatrix} + \begin{bmatrix}
Cdw(t)\\0
\end{bmatrix},
$$
for $t\in[i\Delta T^*,(i+1)\Delta T^*)$ and $x(i\Delta T^*)= \hat{x}(i\Delta T^*)$ for every $i=0,1,2,\cdots$, where
$$
\bar{A} = \begin{bmatrix}
A & - (B_p R_p^{-1} B_p'- B_e R_e^{-1}B_e')\tilde{K}\\
0 & A- (B_p R_p^{-1} B_p'- B_e R_e^{-1}B_e')\tilde{K}
\end{bmatrix}.
$$
At the discrete observation instances, the closed-loop system evolves according to
$$
\begin{bmatrix}
x\left((i+1)\Delta T^*\right)\\
\hat{x}\left((i+1)\Delta T^*\right)
\end{bmatrix}
 = e^{\bar{A}h}\begin{bmatrix}
x\left(i\Delta T^*\right)\\
\hat{x}\left(i\Delta T^*\right)
\end{bmatrix} + \begin{bmatrix}
\Id_n \\ \Id_n
\end{bmatrix} v_k,
$$
where $v_k = \int_{i\Delta T^*}^{(i+1)\Delta T^*} e^{A[(i+1)\Delta T^*-\tau]}Cdw(\tau)$. We know that if $e^{\bar{A}h}$ is Schur, we have $\mathbb{E}[x(i\Delta T^*)]= 0$ as $i\rightarrow \infty$ and $\sup_{i\geq 0}\mathbb{E}[\Vert x(i\Delta T^*)\Vert^2]< \infty$ \cite{kushner1971introduction}. To show $e^{\bar{A}h}$ is Schur, it is sufficient to show $\bar{A}$ is Hurwitz. Since $\hat{x}(i\Delta T^*) = x(i\Delta T^*)$ for every $i =1,2,\cdots$, we just need to show the system
$dz(t) = [A- (B_p R_p^{-1} B_p'- B_e R_e^{-1}B_e')\tilde{K}]z(t)dt$ 
is asymptotically stable  \cite{chen1999linear}. 

Consider a Lyapunov function $V(t) = z'\tilde{K}z$. Indeed,
$$
\begin{aligned}
\dot{V} &= \dot{z}'\tilde{K}z + z'\tilde{K}z\\
&\leq z(t)' [A' - \tilde{K}'(B_p R_p^{-1} B_p'- B_e R_e^{-1}B_e')']\tilde{K} z\\
&\ \ \ \ + z(t)' \tilde{K} [A- (B_p R_p^{-1} B_p'- B_e R_e^{-1}B_e')\tilde{K}]z(t)\\
&\leq z(t)' A'\tilde{K}z(t) - z(t)'[Q +  \tilde{K}A + A' \tilde{K}]z(t)  + z(t)'\tilde{K}Az(t)\\
&\ \ \ \ - z(t)'\tilde{K}(B_p R_p^{-1} B_p'- B_e R_e^{-1}B_e')\tilde{K}z(t)
\\
&=-z(t)'Qz(t) - z(t)'\tilde{K}(B_p R_p^{-1} B_p'- B_e R_e^{-1}B_e')\tilde{K}z(t) \leq 0,
\end{aligned}
$$
If  $B_p R_p^{-1} B_p'- B_e R_e^{-1}B_e'>0$, the Lyapunov stability theorem yields that $\tilde{K}z\rightarrow 0$ and $Q^{1/2}z\rightarrow 0$. Since $\tilde{K}z\rightarrow 0$, $z$ tends to the largest finite invariant set contained in $\{z:Q^{1/2} z =0\}$, for the system $dz = A z dt$. Suppose $(A, Q^{1/2})$ is observable, the largest finite invariant set is merely $x=0$. Hence, the system the system
$dz(t) = [A- (B_p R_p^{-1} B_p'- B_e R_e^{-1}B_e')\tilde{K}]z(t)dt$ 
is asymptotically stable.

Now, we can conclude that if $B_p R_p^{-1} B_p'- B_e R_e^{-1}B_e'>0$ and $(A,Q^{1/2})$ observable, then $\mathbb{E}[x(i\Delta T^*)]= 0$ as $i\rightarrow \infty$ and $\sup_{i\geq 0}\mathbb{E}[\Vert x(i\Delta T^*)\Vert^2]< \infty$. We know for $t\in[i \Delta T^*, (i+1)\Delta T^*)$, 
$$
\begin{aligned}
x(t) =& e^{A(t-i\Delta T^*)} x(i\Delta T^*)\\ 
&+ \int_{i\Delta T^*}^{t} e^{A(t-\tau)}(B_p R_p^{-1} B_p'- B_e R_e^{-1}B_e')\tilde{K}\hat{x}(\tau) d\tau\\
&+ \int_{i\Delta T^*}^{t} e^{A(t-\tau)}Cdw(\tau),
\end{aligned}
$$
where $\hat{x}(t)= e^{[A- (B_p R_p^{-1} B_p'- B_e R_e^{-1}B_e')\tilde{K}](t-i\Delta T^*)} x(i\Delta T)$. Hence, $\mathbb{E}[x(t)] = 0$ as $t\rightarrow \infty$. From \cref{FirstOrderPeriod}, we know $\Sigma(\Delta T^*)$ is bounded if the cost of observation $O_p$ is bounded. Then, if the cost of observation $O_p$ is bounded, $\sup_{t\geq 0}\mathbb{E}[\Vert x(t)\Vert^2]<\infty$.

\section{Numerical Experiments} \label{Sec: Experiements}

% The ``defending an asset'' example \cite{li2011defending,garcia2020two}.

% PE game with zero order dynamics. \cite{glizer2015linear}.

% To read: \cite{oyler2016pursuit} \cite{jagat2017nonlinear}.

% [Two choices: physical interpretation or no physical interpretation] 

\begin{figure}
    \centering
    \includegraphics[width=1\columnwidth]{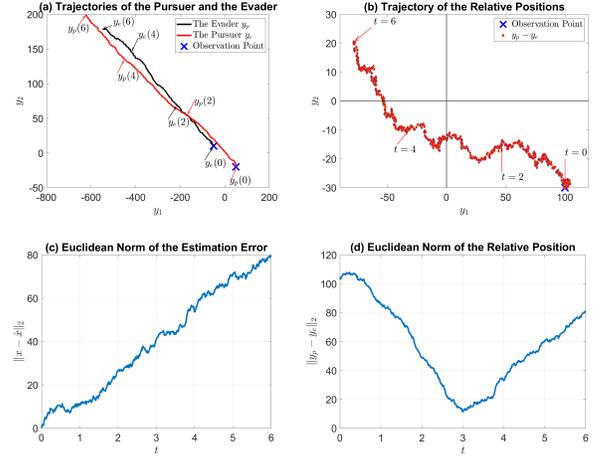}
    \caption{A realization of the PEEC game when $O_p = \infty$ and $c_p = c_e = \sqrt{16}$. (a) Trajectories of the Pursuer and the Evader on a two-dimensional plane; (b) Trajectory of the relative positions between the Pursuer and the Evader; (c) The Euclidean norm of the estimation error over time; (d) The Euclidean norm of the relative positions between the pursuer and the evader over time.}
    \label{fig:cp16T6N0}
\end{figure}

To illustrate the PEEC game and the Nash strategies, we consider a one pursuer and one evader game. The space is a planar surface for visualization purposes. Let $y_p\in\mathbb{R}^2$ be the 2-dimensional coordinates (position) of the pursuer, $z_p = \dot{y}_p\in\mathbb{R}^2$ is be velocity vector and $u_p$ be the acceleration control vector $\mathbb{R}^2$. Let $y_1$ and $y_2$ be the name of the two coordinates. Let $x_p = [y_p'\ z_p']'$ be the state of the pursuer, which includes the location and the velocity of the pursuer. The state of the pursuer is subject certain degree of disturbances which is captured by a $4$-dimensional standard Weiner process $w_p(t)\in\mathbb{R}^4$ for all $t$.  By physical law, the state dynamics of the pursuer is 
$$
dx_p(t) = A x_p(t) dt + B_pu_p(t) dt + C_pdw_p(t),
$$
where 
$$
A = \begin{bmatrix}
0 & 1 \\
0 & 0 \\
\end{bmatrix}\otimes\Id_2,\ \ \ B_p= \begin{bmatrix}
0 \\ 1
\end{bmatrix}\otimes \Id_2,\ \ \ C_p = c_p \cdot \Id_4.
$$
We define $y_e\in\mathbb{R}^2$ be the coordinates of the evader. Similarly, we have $z_e = \dot{y}_e$  and $x_e = [y_e'\ z_e']'$. The state dynamics of the evader can be described by $
dx_e(t) = A x_e(t) dt + B_eu_e(t) dt + C_edw_e(t),
$
where
$$
B_e= \begin{bmatrix}
0 \\ 1
\end{bmatrix}\otimes \Id_2,\ \ \ C_e = c_e \cdot \Id_4.
$$
Define a new state $x = x_p -x_e$. We have
$$
dx(t)  = Ax(t)dt  + B_p u_p(t)dt - B_e u_e(t) + Cdw(t),
$$
where $c = \sqrt{c_p^2 + c_e^2}$ and $(w(t),t \geq 0)$ is a $4$-dimensional standard Wiener process. The pursuer is trying to minimize the distance between him/her and the evader. The evader is trying to maximize it. Assume that acceleration on both axes require the same amount of effort/energy. Hence, we have
$$
Q = \begin{bmatrix}
1 & 0\\ 0 & 0\\ 
\end{bmatrix}\otimes \Id_2,\ \ \ R_p= \gamma R_e,\ \ \ R_e = 2\cdot \Id_2,
$$
where $\gamma \leq 1$. Let $Q_T = 10 \cdot Q$ and $\gamma = 0.8$. Let the terminal time $T=6s$. We set the initial positions and the initial velocities of the two players to be $x_p(0) = [50\ -20\  5\  10]'$ and $x_e(0) = [-50\ 10\ 1\ 10]'$. Parameters $c$, $\gamma$, $O_p$, and $O_e$ are subject to change. 

For numerical computation of the Nash observation strategies, we know that when $\gamma = 0.8$, the evader has less maneuverability than the pursuer. Hence, the evader's observation strategy is to not observe to expose himself/herself. To compute the pursuer's strategy, we first leverage the result given in \cref{ObservationNumberUpperBound} to compute the upper bound of the optimal number of observations $\bar{N}_p^*$. Then, for every $N_p \leq \bar{N}_p^*$, we solve the finite-dimensional optimization in \cref{innerObservationOptimization} using Algorithm \ref{algorithm1}.

In Figures \ref{fig:cp16T6N0}-\ref{fig:cp16T6N25}, we present the realizations of the PEEC game under various costs of observation when the system noise level is $c_p = c_e = \sqrt{16}$. In Figure \cref{fig:cp32T6N2}, we present a realization of the PEEC game when the optimal number of observations is $2$ and the system noise level is $c_p = c_e = \sqrt{32}$. To facilitate the visualization, we use animation to show the moving trajectories of the pursuer and the evader in the link \footnote{\url{https://github.com/Yun-Han/PE-DifferentialGame-StrategicInfo/tree/master/VideoSharing}}. We also add time indices $t=\{0,2,4,6\}$ to the figures to help readers visualize the moving trajectory.

\begin{figure}
    \centering
    \includegraphics[width=1\columnwidth]{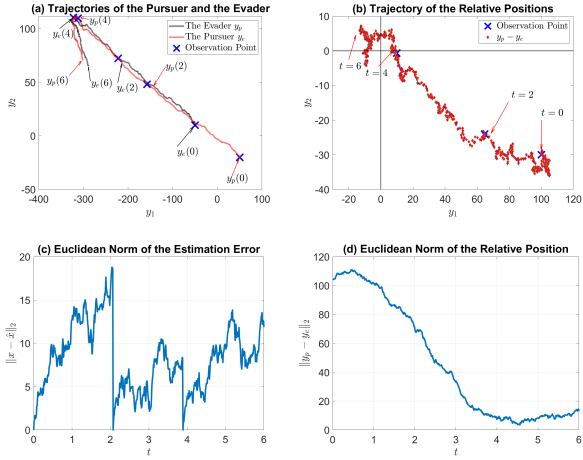}
    \caption{A realization of the PEEC game when $O_p = 900$ and $c_p = c_e = \sqrt{16}$. (a) Trajectories of the Pursuer and the Evader on a two-dimensional plane; (b) Trajectory of the relative positions between the Pursuer and the Evader; (c) The Euclidean norm of the estimation error over time; (d) The Euclidean norm of the relative positions between the pursuer and the evader over time.}
    \label{fig:cp16T6N2}
\end{figure}

When the cost of observation is infinity, i.e., $O_p=\infty$, the optimal observation strategy for the pursuer is to not observe at all. As we can see in fig. \ref{fig:cp16T6N0} (a), the only observation point (marked by a blue cross marker) is the initial conditions that are assumed to known to both players. In this case. the controls of both players are equivalent to the open-loop Nash control strategies in a deterministic setting. Since both players know each other's initial position, at the beginning, the evader escapes toward the exact opposite direction of where the pursuer is initially located. This is due to the fact that acceleration on $y_1$ axis and $y_2$ axis requires the same cost, i.e., $R_p$ and $R_e$ are identity matrices multiplied by some constants. As we can see from  in fig. \ref{fig:cp16T6N0} (d), the euclidean distance between the pursuer and the evader narrows. But as the estimation error accumulates due to no observation, the pursuer lose track of the evader and even goes beyond where the evader is actually located $y_e(6)$.

\begin{figure}
    \centering
    \includegraphics[width=1\columnwidth]{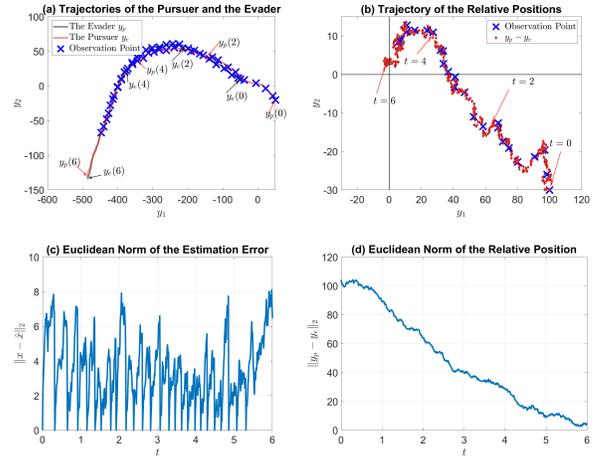}
    \caption{A realization of the PEEC game when $O_p = 10$ and $c_p = c_e = \sqrt{16}$. (a) Trajectories of the Pursuer and the Evader on a two-dimensional plane; (b) Trajectory of the relative positions between the Pursuer and the Evader; (c) The Euclidean norm of the estimation error over time; (d) The Euclidean norm of the relative positions between the pursuer and the evader over time.}
    \label{fig:cp16T6N25}
\end{figure}

When the cost of observation is $O_p = 900$, the optimal observation strategy for the pursuer is to observe two times at time instances $\mathcal{T}_p=\{2.06s, 3.87s\}$. Since when the pursuer observes, the evader also knows the pursuer's location at the same time. Hence, there are $6$ observation points for both players in Fig. \ref{fig:cp16T6N2} (a) including the initial points. Based on the initial condition, as in  \ref{fig:cp16T6N0} (a), the evader runs away from the pursuer and the pursuer chases after the evader following the same direction. At $t=2.06s$, the pursuer triggers the observation and both players observe each other's location. At this time, the relative position between the two players has the almost the same angle as the relative positives at time $0$, so the trajectory of the two players is almost a line until the next observation at $t=3.87s$. At $t=3.87s$, the pursuer and the evader receive each other's location and realize the relative angle between them is changed. Thus, after the observation, both players adjust their directions of chasing and evading, which cause a sharp turn in their trajectories. As we can see from Fig. \ref{fig:cp16T6N2} (c) that the estimate is refreshed to the actual state information and the estimation error is reset to $0$ when an observation arrives. From Fig. \ref{fig:cp16T6N2} (b), the relative position between the two players is close to the origin near the terminal time. And as is shown in \ref{fig:cp16T6N2} (d), the Euclidean distance of the relative position goes down to $5$ at the end, which is a relative low value compared with the Euclidean distance at the initial positions. This indicates that when the disturbances level $c_p = c_e =\sqrt{16}$, it is not necessary to observe every time to ensure a good performance. With an optimized set of observation time instances $\mathcal{T}_p$, the pursuer can also achieve a fairly good performance. Hence, the Nash observation strategy can also be used to help the pursuer save sensing/communication costs while maintaining a certain level of performance.

\begin{figure}
    \centering
    \includegraphics[width=1\columnwidth]{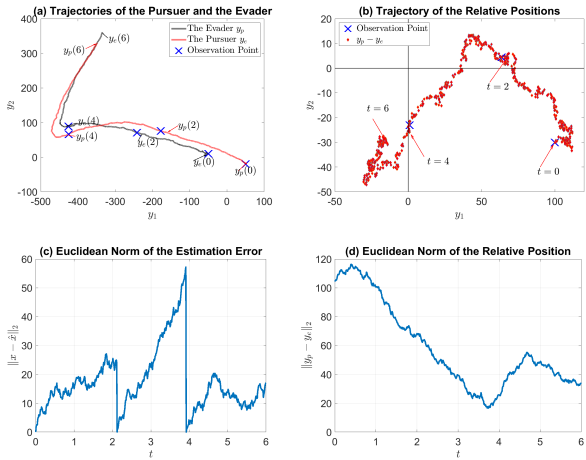}
    \caption{A realization of the PEEC game when $O_p = 10$ and $c_p = c_e = \sqrt{32}$. (a) Trajectories of the Pursuer and the Evader on a two-dimensional plane; (b) Trajectory of the relative positions between the Pursuer and the Evader; (c) The Euclidean norm of the estimation error over time; (d) The Euclidean norm of the relative positions between the pursuer and the evader over time.}
    \label{fig:cp32T6N2}
\end{figure}

If the cost of observation goes down to $O_p = 10$, it is optimal to observe $25$ times.
As we can see from Fig. \ref{fig:cp16T6N25} (a), the pursuer follows behind the evader and trajectories of two players overlap. We refer the readers to the animation provided in the link\footnote{\url{https://github.com/Yun-Han/PE-DifferentialGame-StrategicInfo/tree/master/VideoSharing}} for a clearer description of the trajectories. The pursuer senses frequently and as a result, the evader receives observation frequently. Hence, the pursuer and the evader adapts their controls immediately when they realize the angle of the relative position changes. The estimation error remains low as is shown in Fig. \ref{fig:cp16T6N25} (c). From Fig. \ref{fig:cp16T6N25} (b) and (d), we can see that with better maneuverability and frequent observations, the pursuer can easily narrows the distance to the evader to near zero before the terminal time. 

We increase the system disturbances level to $c_p = c_e = \sqrt{32}$. Fig. \ref{fig:cp32T6N2} presents a realization of the PEEC game when the optimal number of observations is $N_p^* =2$. Compared with the setting with lower disturbances, which is presented in Fig. \ref{fig:cp16T6N2}, the pursuer fails to narrow his/her distance to the evader to near zero when the system disturbances is larger. This shows that larger system disturbances give more advantage to an evader with less maneuverability when the pursuer has to pay a large overhead to sense. Hence, if an evader is less maneuverable than the pursuer, the evader can still escape if he/she can keep a high stealth level (makes it more expensive for the pursuer to observe). In military applications, this means stealth technologies are especially important for battlefield things with less maneuverability.

In conclusion, in this section, we show that a pursuer with higher maneuverability than the evader prefers more observations(exposures). But the pursuer can achieve reasonably good performance even when the number of observations is low. The Nash observation strategy enables the pursuer to observe less often while maintaining a good performance. We also show that when only a limited number of observations are available, larger system disturbances give an evader with less maneuverability more advantage. A less maneuverable evader can still escape if he/she can avoid being detected by his/her opponent frequently by making it more expensive for his/her opponent to observe.

\section{Conclusions}
This paper proposes a framework that introduces the concept of controlled information into PE differential games. This framework enriches the existing framework of PE differential games by capturing the interactions between the pursuer and the evader in the battlefield of information. We show that the Nash observation strategies depend only on the system characteristics. Players with less maneuverability won't observe at all in fear of the exposure of his/she own state. The proposed PEEC game has a symmetric information structure because when one player observes, the other player also obtains the information. With symmetric information structure, we avoid the second-guessing problem, which may render the problem untractable. The framework also sparks several exciting ideas for future exploring: 1. when one player senses(detects) the state(location) of the other player, he/she may expose his state (location), but the information received by the other player is noisier than what he/she receives. This scenario creates an asymmetric information game with noised observations. 2. future works can focus on analyzing the statistics aspects in terms of the players' performance, such as the probability of capture within a given time.

\appendix

\subsection{Proof of \Cref{Lemma:CostFunctionalSquare}}\label{Appendix:CostFunctionalSquare}
\begin{proof}
In this proof, we drop the time index of some variables for simplicity and readability purposes. The proof follows the arguments in the proof of Theorem {\rm II}.1 in \cite{duncan2015linear}.

Let $f(x,t) \coloneqq x(t)' K(t) x(t).$ An application of It\^{o}'s formula \cite{durrett2019probability} gives
{\small$$
\begin{aligned}
&df(x,t)\\
=& \frac{\partial f}{\partial t}(x,t) dt + \nabla_x f(x,t)' dx(t)+ \frac{1}{2}dx(t)' \nabla_{xx} f(x,t)dx(t) + o(dt),\\
=&x'\dot{K}x dt + x'K dx + dx'Kx + dx'Kdx + o(dt)\\
=& x' \left(\dot{K} + KA + A'K \right)x dt + x'K(B_p u_p - B_e u_e + Cw(t))dt\\
&+ (B_p u_p - B_e u_e+ Cw(t))'Kxdt + dw(t)'C'KCdw(t)dt  + o(dt),
\end{aligned}
$$}
where $\nabla_x$ and $\nabla_{xx}$ are the gradient and Hessian operators with respect to $x$ respectively. Immediately, we have
\begin{equation}\label{ItoIntegral}
\begin{aligned}
0 =& \mathbb{E}\Big[ \int_0^T x'(\dot{K} + KA + A'K)x + x'K(B_p u_p - B_eu_e) \\
&+ (B_pu_p -B_e u_e)'Kx dt \Big]\\
&+\int_0^T \Tr(KCC')dt -\left\{ \mathbb{E}\left[f\left(x(T),T\right) - f(x(0),0)  \right]\right\}.
\end{aligned}
\end{equation}
Adding the right-hand-side of \cref{ItoIntegral} to $J$ in \cref{CostFunctional} and completing the squares yield 
{\small
$$
\begin{aligned}
J =& \mathbb{E}\Big[ x(0)'K(0)x(0) - x(T)'K(T) x(T) + x(T)'Q_T x(T)\Big]\\
&+ \mathbb{E}\Big[\int_0^T x' \Big[ \dot{K} + KA + A'K+K\left(B_e R_e^{-1}B_e' - B_p R_p^{-1} B_p' \right) \Big]x' dt \Big]\\
&+ \mathbb{E}\Big[ \int_0^T\Vert u_p + R_p^{-1}B_p'Kx \Vert^2_{R_p} - \Vert  u_e +R_e^{-1} B'_eKx\Vert^2_{R_e}\\
&+ \Tr(KCC') dt + O_pN_p - O_eN_e \Big]\\
=&\Vert x_0 \Vert_{K(0)}^2 + \mathbb{E}\Big[ \int_0^T\Vert u_p + R_p^{-1}B_p'Kx \Vert^2_{R_p} - \Vert  u_e + R_e^{-1}B'_eKx\Vert^2_{R_e}\\
&+ \Tr(KCC') dt +O_pN_p - O_eN_e\Big].
\end{aligned}
$$}
This completes the proof.
\end{proof}

\subsection{Proof of \Cref{Theorem:NashControlStrategy}}\label{proof:NashControlStrategy}

\begin{proof}
In this proof, we drop the time index of some variables for simplicity and readability purposes. The proof follows follow a similar line of arguments as in \cite{bagchi1981linear,Maity2017ACC}. Given arbitrary $\Omega_p$ and $\Omega_e$, Player $\mathcal{P}$ aims to minimize $J$. Meanwhile, player $\mathcal{E}$ aims to maximize $J$. From \Cref{Lemma:CostFunctionalSquare}, we know that only the first two terms in \cref{CostFunctionalSquared} depend on the choices $u_p$ and $u_e$. Thus, the Nash control strategies can be obtained by solving the following problem
$$
\min_{u_p\in\mathcal{U}_p} \max_{u_e \in \mathcal{U}_e} J_c(u_p,u_e),
$$
where
$$
\begin{aligned}
J_c(u_p,u_e)\coloneqq& \mathbb{E} \bigg[ \int_{0}^T \Vert  u_p(t)+ R_p^{-1}B_p'K(t)x(t)\Vert^2_{R_p} \\
&- \Vert u_e(t)+  R_e^{-1}B_e'K(t)x(t)\Vert^2_{ R_e} dt\bigg].
\end{aligned}
$$
From Proposition 3.2 of \cite{Maity2017ACC}, we know that a necessary condition of a Nash control strategy is that $u_p$ lies in the range space of the linear operator $R^{-1}_p B_p'K$ and $u_e$ lies in the range space of the linear operator $R_e^{-1}B_e'K$. Since $\mathcal{U}_p$ and $\mathcal{U}_e$ are the sets of admissible control strategies that are progressively measurable with respect to $\mathcal{I}$. Thus, the Nash control strategies take the form of 
$$
\left(u_p(t),u_e(t)\right) = \left(R_p^{-1}B_p'K(t)\hat{x}_p(t),R_e^{-1}B_e'K(t)\hat{x}_e(t)\right),
$$
where $\hat{x}_p(t)$ and $\hat{x}_e(t)$, chosen by player $\mathcal{P}$ and player $\mathcal{E}$ respectively, have to be $\mathcal{I}(t)$ measurable.

The problem now becomes solving the following problem by finding $\hat{x}_p$ and $\hat{x}_e$ that are progressively $\mathcal{I}$ measurable:
$$
\begin{aligned}
\min_{\hat{x}_p} \max_{\hat{x}_e} \tilde{J}_c(\hat{x}_p,\hat{x}_e) \coloneqq & \int_0^T\mathbb{E}\Bigg[ \Vert x-\hat{x}_p \Vert_{KB_pR_p^{-1}B_p' K}^2 \\
&-  \Vert x - \hat{x}_e\Vert_{KB_e R_e^{-1}B_e' K}^2 \Bigg\vert \mathcal{I}(t) \Bigg]dt.
\end{aligned}
$$
Next, we study the first and second-order G\^ateaux differentials of $\tilde{J}_c$ to characterize a Nash strategy $(\hat{x}_p,\hat{x}_e)$. First, let's calculate the first and second-order of G\^{a}teaux differentials (pp.120 \cite{cheney2001analysis}) of $\tilde{J}_c$ at $(\hat{x}_p,\hat{x}_e)$ with directions $(h_e,h_e)$:
{\small
\begin{equation}\label{GateauxDerivative}
\begin{aligned}
&d_{(h_p,h_e)}\tilde{J}_c(\hat{x}_p,\hat{x}_e) \coloneqq \lim_{\epsilon\rightarrow 0} \frac{\tilde{J}_c(\hat{x}_p + \epsilon h_p,\hat{x}_e + \epsilon h_e) -\tilde{J}_c(\hat{x}_p,\hat{x}_e)}{\epsilon},\\
&d^2_{(h_p,h_e)}\tilde{J}_c(\hat{x}_p,\hat{x}_e)\coloneqq\\
&\lim_{\epsilon \rightarrow 0} \frac{\tilde{J}_c(\hat{x}_p + \epsilon h_p,\hat{x}_e +\epsilon h_e) - \tilde{J}_c(\hat{x}_p,\hat{x}_e) - \epsilon d_{(h_p,h_e)}\tilde{J}_c(\hat{x}_p,\hat{x}_e)}{\epsilon^2}.
\end{aligned}
\end{equation}}

Note that given $\hat{x}_p$ and $\hat{x}_e$, the solution of \cref{StochasticDynamics} can be expressed as
\begin{equation}\label{SolutionUnperturped}
\begin{aligned}
x(t) &= e^{At}x_0 - \int_{0}^t e^{A(t-s)} B_p R_p^{-1}B_p'K(s)\hat{x}_p(s) ds \\
&+ \int_0^t e^{A(t-s)}B_eR^{-1}_eB_e'K(s)\hat{x}_e(s)ds + \int_0^t e^{A(t-s)}C dw(s).
\end{aligned}
\end{equation}
Given the perturbations $\epsilon h_p$ and $\epsilon h_e$ on $\hat{x}_p$ and $\hat{x}_e$, the solution of \cref{StochasticDynamics} becomes
\begin{equation}\label{SolutionPerturped}
\tilde{x}(t) = x(t) - \epsilon H_p[h_p](t) + \epsilon H_e[h_e](t),
\end{equation}
where $H_p$ and $H_e$ are liear operators defined as
$$
\begin{aligned}
H_p[h_p](t) \coloneqq \int_0^t e^{A(t-s)}B_pR_p^{-1}B_pK(s)h_p(s)ds\\
H_e[h_e](t) \coloneqq \int_0^t e^{A(t-s)}B_eR_e^{-1}B_eK(s)h_e(s)ds.
\end{aligned}
$$
Therefore, we have
\begin{equation}\label{PerturbedCostFunctional}
\begin{aligned}
\tilde{J}_c(\hat{x}_p+\epsilon h_p, \hat{x}_e + \epsilon h_e) =  \int_0^T\mathbb{E}\Big[ \Vert \tilde{x}-\hat{x}_p \Vert_{KB_pR_p^{-1}B_p' K}\\ -  \Vert \tilde{x} - \hat{x}_e\Vert_{KB_e R_e^{-1}B_e' K}  \Big\vert \mathcal{I}(t) \Big]dt. 
\end{aligned}
\end{equation}
Using \cref{SolutionUnperturped,SolutionPerturped,PerturbedCostFunctional} in \cref{GateauxDerivative}, we have
\begin{equation}\label{FirstDerivative}
\small
\begin{aligned}
&\frac{1}{2}d_{(h_p,h_e)}\tilde{J}_c(\hat{x}_p,\hat{x}_e)=\\
&  \int_0^T \mathbb{E}\Bigg[ -\left[h_p(t) + H_p[h_p](t) - H_e[h_e](t)\right]'KB_pR_p^{-1} B_p' K(x-\hat{x}_p) \\
&+ \left[h_e(t) +H_p[h_p](t) - H_e[h_e](t)\right]'KB_e R_e^{-1} B_e'K(x-\hat{x}_e)\Bigg\vert \mathcal{I}(t)\Bigg]dt.
\end{aligned}
\end{equation}

The necessary condition for $(\hat{x}_p,\hat{x}_e)$ being a Nash strategy is $d_{(h_p,h_e)}\tilde{J}_c(\hat{x}_p,\hat{x}_e) =0$ for all possible directions $(h_p,h_e)$. Under this condition, both players have no incentives to move away from $(\hat{x}_p,\hat{x}_e)$. 

Here, we consider 
\begin{equation}\label{heConstruction}
h_e(t) = -\int_0^t e^{(A+B_eR_e^{-1} B_e'K)(t-s)} B_pR_p^{-1}B_p'K(s)h_p(s)ds.
\end{equation}
Hence, we have
{$$\small
\begin{aligned}
&H_e[h_e](t) \\
=& -\int_0^t e^{A(t-s)}B_eR_e^{-1}B_eK(s)\cdot\\ 
&\ \ \ \int_0^s e^{(A+B_eR_e^{-1} B_e'K)(s-\tau)} B_pR_p^{-1}B_p'K(\tau)h_p(\tau)d\tau ds\\
&=-\int_0^t \left[\int_{\tau}^t e^{A(t-s)}B_eR_e^{-1}B_eK(s) e^{(A+B_eR_e^{-1} B_e'K)(s-\tau)} ds\right]\cdot\\
&\ \ \  B_pR_p^{-1}B_p'K(\tau)h_p(\tau)d\tau \\
&=-\int_0^t \left[ \int_\tau^t \frac{d}{ds} e^{A(t-s)}e^{(A+B_eR_e^{-1}B_e'K)(s-\tau)}ds \right]\cdot\\
&\ \ \  B_pR_p^{-1}B_p'K(\tau)h_p(\tau)d\tau\\
&=-\int_0^t \left[  -e^{(A+B_eR_e^{-1}B_e'K)(t-\tau)} + e^{A(t-\tau)} \right]\cdot\\
&\ \ \ B_pR_p^{-1}B_p'K(\tau)h_p(\tau)d\tau\\
&= -H_p[h_p](t) -h_e(t).
\end{aligned}
$$}
That means for any $h_p$, we can construct $h_e$ following \cref{heConstruction} such that $h_e = H_e[h_e] - H_p[h_p]$. Hence, for all possible $h_p$, we have $h_e$ defined by \cref{heConstruction} such that
$$
\begin{aligned}
&\frac{1}{2}d_{(h_p,h_e)}\tilde{J}_c(\hat{x}_p,\hat{x}_e)\\
=& \int_0^T \mathbb{E}\left[-(h_p-h_e)'KB_p R_p^{-1}B_p'K(x-\hat{x}_p)\middle\vert \mathcal{I}(t) \right]dt.
\end{aligned}
$$
Hence, the necessary condition that makes sure $\frac{1}{2}d_{(h_p,h_e)}\tilde{J}_c(\hat{x}_p,\hat{x}_e)=0$ for all possible $h_p$ is 
$$
\mathbb{E}\left[ x(t) - \hat{x}_p(t) \middle \vert \mathcal{I}(t)\right],\ \ \ \textrm{for all }t.
$$
That means $\hat{x}_p(t) = \mathbb{E}\left[x(t) \middle\vert \mathcal{I}(t)\right]$. Similarly, for any $h_e$, we construct $h_p$ as
\begin{equation}\label{HpConstructed}
h_p(t) = \int_0^t e^{(A-B_pR_p^{-1} B_p'K)(t-s)} B_eR_e^{-1}B_e'K(s)h_e(s)ds,
\end{equation}
which gives $h_p = H_e[h_e] - H_p[h_p]$. For any given $h_e$ and $h_p$ constructed by \cref{HpConstructed}, we have
$$
\begin{aligned}
&\frac{1}{2}d_{(h_p,h_e)}\tilde{J}_c(\hat{x}_p,\hat{x}_e) \\
=& \int_0^T \mathbb{E}\left[(h_e-h_p)'KB_p R_p^{-1}B_p'K(x-\hat{x}_e)\middle\vert \mathcal{I}(t) \right]dt.
\end{aligned}
$$
Therefore, the necessary condition to guarantee that $d_{(h_p,h_e)}\tilde{J}_c(\hat{x}_p,\hat{x}_e)$ for all possible $h_e$ is
$$
\mathbb{E}\left[ x -\hat{x}_e\middle \vert \mathcal{I}(t)\right] =0,\ \ \ \textrm{for all } t.
$$
This implies $\hat{x}_p = \hat{x}_e = \mathbb{E}[x(t)\vert \mathcal{I}(t)]$. Note that $\mathcal{I}(t)= \{x(s)|0<s\leq t,s\in\mathcal{T}\}$, where $\mathcal{T}=\{t_1,t_2,\cdots,t_{N_p+N_e}\}$. Using the fact that $\mathbb{E}[\int_0^t e^{A(t-s)C}dw(s) \vert \mathcal{I}(t)]$ is a martingale \cite{durrett2019probability}, we obtain the following differential equation for $\hat{x}(t) \coloneqq\mathbb{E}[x(t)\vert \mathcal{I}(t)]$:
\begin{equation}\label{EstimatorDynamics}
\begin{aligned}
&d\hat{x}(t) = \left(A- (B_p R_p^{-1} B_p'- B_e R_e^{-1}B_e')K\right)\hat{x}(t) dt,\\
&\hat{x}(0) = x_0,\ \hat{x}(\tau) = x(\tau),\ \textrm{for all }\tau \in \mathcal{T}.
\end{aligned}
\end{equation}
To show the sufficiency of $(\hat{x}_p,\hat{x}_e) = (\hat{x},\hat{x})$ being a Nash equilibrium, we resort to the second order G\^{a}teaux differential defined in \cref{GateauxDerivative}. Following the definition in \cref{GateauxDerivative}, we calculate
$$
\begin{aligned}
&\frac{1}{2}d^2_{(h_p,h_e)}\tilde{J}_c(\hat{x}_p,\hat{x}_e)\\
=& \int_0^T \mathbb{E}\bigg[  \left\Vert h_p(t) + H_p[h_p](t) - H_e[h_e](t)  \right\Vert^2_{KB_p R_p^{-1} B_p'K} \\
&-  \left\Vert h_e(t) + H_p[h_p](t) - H_e[h_e](t)  \right\Vert^2_{KB_e R_e^{-1} B_e'K} \bigg\vert \mathcal{I}(t)\bigg].
\end{aligned}
$$
We need to show that at point $(\hat{x}_p,\hat{x}_e) = (\hat{x},\hat{x})$, there exist some directions $(h_p,h_e)$ such that $d^2_{(h_p,h_e)}\tilde{J}_c(\hat{x}_p,\hat{x}_e)<0$ and some other directions $d^2_{(h_p,h_e)}\tilde{J}_c(\hat{x}_p,\hat{x}_e) >0$. To show this, consider any $h_p\neq 0$ and  $h_e$ constructed according to \cref{heConstruction}. Then, let $h_p$ be a constant over time. We have $\frac{1}{2}d^2_{(h_p,h_e)}\tilde{J}_c(\hat{x}_p,\hat{x}_e) >0$. Similarly, we can show there exist some $(h_p,h_e)$ such that $\frac{1}{2}d^2_{(h_p,h_e)}\tilde{J}_c(\hat{x}_p,\hat{x}_e) <0$. This proves that $(\hat{x}_p,\hat{x}_e) = (\hat{x},\hat{x})$, where $\hat{x}$ has dynamics \cref{EstimatorDynamics}, constitutes a Nash control strategy of the PEEC game.
\end{proof}

\subsection{Proof of \Cref{Corollary:RewriteCostFunctional}}\label{Appendix:RewriteCostFunctional}

\begin{proof}
Using \cref{CostFunctionalSquared} in \cref{Lemma:CostFunctionalSquare} and the results in \Cref{Theorem:NashControlStrategy}, we know that 
$$
\begin{aligned}
\tilde{J}(\Omega_p, \Omega_e) &\coloneqq J(\Omega_p,u^*_p(\Omega_p,\Omega_e), \Omega_e,u^*_e(\Omega_p,\Omega_e))\\
&=\mathbb{E} \left[ \int_{0}^T \Vert  x-\hat{x}\Vert^2_{K(B_pR_p^{-1}B_p'-B_eR_e^{-1}B_e')K} dt\right]\\
&\ \ \ + O_pN_p - O_eN_e 
+\Vert x_0 \Vert^2_{K(0)}\\
&\ \ \ + \int_0^T  \Tr\left(K(t)CC'\right)dt.
\end{aligned}
$$
Using \cref{EstimatorDynamic0} and \cref{StochasticDynamics}, we know that 
$$
\begin{aligned}
dx - d\hat{x} &= \left(Ax - B_p R^{-1}_p B'_p K\hat{x} + B_e R^{-1}_e B'_e K\hat{x}\right)dt + C dw(t)\\
&\ \ \ \ -\left(A- (B_p R_p^{-1} B_p'- B_e R_e^{-1}B_e')K\right)\hat{x} dt,\\
&=A(x-\hat{x}) + Cdw(t)
\end{aligned}
$$
with refreshing points 
$$
\hat{x}(0) =x(0)= x_0,\ \hat{x}(\tau) = x(\tau),\ \textrm{for all }\tau \in \mathcal{T}.
$$
Thus, for any $t\in(0,T]$, let $\tau = \max\{s\ |\ s\in\mathcal{T},s<t\}$. We have
$$
\begin{aligned}
p(t)& \coloneqq x(t)-\hat{x}(t)\\
&= e^{A(t-\tau)} p(\tau)  + \int_\tau^t e^{A(t-s)} Cdw(s-\tau)\\
&= \int_\tau^t e^{A(t-s)}Cdw(s-\tau).
\end{aligned}
$$
Hence $\mathbb{E}[p(t)] = 0$. Let $P(t)\coloneqq \mathbb{E}[p(t)p(t)']$ be the variance of the estimation error. We have
\begin{equation}\label{EstimationVarianceDynamics}
P(t) = \int_\tau^t e^{A(t-s)} C C' e^{A(t-s)'} ds.
\end{equation}
Hence, we have
\begin{equation}\label{EstimationCostIntermediate}
\begin{aligned}
&\mathbb{E} \left[ \int_{0}^T \Vert  x-\hat{x}\Vert^2_{K(B_pR_p^{-1}B_p-B_eR_e^{-1}B_e)'K} dt\right] \\
=&\int_{0}^T\mathbb{E}\left[ p(t)'\left[K(B_pR_p^{-1}B_p'-B_eR_e^{-1}B_e')K\right]p(t) \right]dt\\
=& \int_{0}^T \Tr\left(P(t) K(B_pR_p^{-1}B_p-B_eR_e^{-1}B_e)'K\right)dt\\
=&\sum_{i=0}^{{N_p+N_e}} \int_{t_i}^{t_{i+1}} \Tr\Bigg[\left(\int_{t_i}^t e^{A(t-s)} CC'e^{A(t-s)}ds \right)\cdot\\
&\ \ \ \ K(B_pR_p^{-1}B_p'-B_eR_e^{-1}B_e')K \Bigg]dt\\
=&\sum_{i=0}^{{N_p+N_e}} \int_{t_i}^{t_{i+1}} \Tr\left[\Sigma(t-\tau_i) \varphi(t)\right]dt,\\
\end{aligned}
\end{equation}
where $\Sigma(t) = \int_0^t  e^{A(t-s)} C C' e^{A(t-s)'}dt$, $\varphi(t)= K(t)(B_pR_p^{-1}B_p'-B_eR_e^{-1}B_e')K(t)$, and $0=t_0 <t_1 \leq t_2 \cdots \leq t_{N_p+N_e}<t_{N_p+N_e+1} =T$. Hence, we complete the proof by showing 
$$
\begin{aligned}
\tilde{J}(\Omega_p, \Omega_e) &= \sum_{i=0}^{{N_p+N_e}} \int_{t_i}^{t_{i+1}} \Tr\left[\Sigma(t-\tau_i) \varphi(t)\right]dt + O_pN_p - O_eN_e \\
&\ \ \ 
+\Vert x_0 \Vert^2_{K(0)} + \int_0^T  \Tr\left(K(t)CC'\right)dt.
\end{aligned}
$$
\end{proof}

\subsection{Proof of \Cref{Proposition:ObservationStrategies}}\label{Appendix:ObservationStrategies}

\begin{proof}
First, we state two claims that are useful in the proof. 
\begin{claim}[Proposition 8.5.12 of \cite{bernstein2009matrix}]\label{Claim:TraceTwoMatrics}
Consider two symmetric matrices $\Sigma_1$ and $\Sigma_2$, and a positive semi-definite matrix $\Phi$. If $\Sigma_1 \leq \Sigma_2$, then $\Tr(\Sigma_1 \Phi) \leq \Tr(\Sigma_2 \Phi)$.
\end{claim}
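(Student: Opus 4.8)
The plan is to reduce the claim to the elementary fact that the trace of a product of two positive semi-definite matrices is non-negative. By linearity of the trace, the desired inequality $\Tr(\Sigma_1\Phi)\leq \Tr(\Sigma_2\Phi)$ is equivalent to $\Tr\big((\Sigma_2-\Sigma_1)\Phi\big)\geq 0$. Writing $M\coloneqq \Sigma_2-\Sigma_1$, the hypothesis $\Sigma_1\leq \Sigma_2$ says exactly that $M\geq 0$, so it suffices to prove $\Tr(M\Phi)\geq 0$ whenever $M\geq 0$ and $\Phi\geq 0$.

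First I would exploit the symmetric positive semi-definite square root of $\Phi$. Since $\Phi\geq 0$, the spectral theorem supplies a symmetric $\Phi^{1/2}\geq 0$ with $\Phi=\Phi^{1/2}\Phi^{1/2}$. Using the cyclic invariance of the trace, I would rewrite
$$
\Tr(M\Phi)=\Tr\big(M\Phi^{1/2}\Phi^{1/2}\big)=\Tr\big(\Phi^{1/2}M\Phi^{1/2}\big).
$$
The matrix $\Phi^{1/2}M\Phi^{1/2}$ is symmetric, and for every vector $v$ one has $v'\Phi^{1/2}M\Phi^{1/2}v=(\Phi^{1/2}v)'M(\Phi^{1/2}v)\geq 0$ because $M\geq 0$; hence the congruence $\Phi^{1/2}M\Phi^{1/2}$ is itself positive semi-definite.

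Finally I would invoke the fact that the trace of a positive semi-definite matrix equals the sum of its non-negative eigenvalues and is therefore non-negative, giving $\Tr\big(\Phi^{1/2}M\Phi^{1/2}\big)\geq 0$; tracing back the equalities yields $\Tr(M\Phi)\geq 0$, i.e.\ $\Tr(\Sigma_1\Phi)\leq \Tr(\Sigma_2\Phi)$. An equivalent route avoiding the square root is to diagonalize $\Phi=\sum_i \lambda_i v_iv_i'$ with $\lambda_i\geq 0$ and write $\Tr(M\Phi)=\sum_i \lambda_i\, v_i'Mv_i\geq 0$, and I would present whichever is more self-contained. There is essentially no hard step here: the statement is a standard matrix fact (indeed the authors simply cite it), and the only points requiring care are justifying the positive semi-definite square root of $\Phi$ (or, in the alternative route, the orthonormal eigenbasis of the symmetric $\Phi$) and the observation that congruence preserves positive semi-definiteness, both immediate from the spectral theorem for symmetric matrices.
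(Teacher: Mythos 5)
Your argument is correct and complete: reducing to $\Tr\big((\Sigma_2-\Sigma_1)\Phi\big)\geq 0$ and then using the symmetric square root $\Phi^{1/2}$ with cyclic invariance of the trace (or, equivalently, the spectral decomposition $\Phi=\sum_i\lambda_i v_iv_i'$) is the standard proof of this fact. The paper itself gives no proof at all --- it simply cites the claim as Proposition 8.5.12 of Bernstein's \emph{Matrix Mathematics} --- so your write-up supplies a self-contained justification where the authors rely on a reference, and there is nothing in the paper's treatment to compare against or to contradict.
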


\begin{claim}\label{Claim:VarianceTwoSets}
Let $(P_1,t\in[0,T])$ be the variance of the estimation error defined in \cref{EstimationVarianceDynamics} associated with $\mathcal{T}_1$, and $(P_2,t\in[0,T])$ be the variance of the estimation error defined in \cref{EstimationVarianceDynamics} associated with $\mathcal{T}_2$. If $\mathcal{T}_1\subset \mathcal{T}_2$, then $P_1(t)> P_2(t)$ for all $t\in [0,T]$.
\end{claim}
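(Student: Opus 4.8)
The plan is to reduce the whole statement to the elementary monotonicity of the accumulated-error map $\Sigma$. First I would recall from \cref{EstimationVarianceDynamics} that for any observation set $\mathcal{T}_j$ and any $t\in(0,T]$ the estimation-error variance is
\[
P_j(t)=\int_{\tau_j(t)}^{t} e^{A(t-s)}CC'e^{A(t-s)'}\,ds,
\]
where $\tau_j(t)\coloneqq\max\{s\in\mathcal{T}_j\cup\{0\}\mid s<t\}$ is the most recent reset (an observation instant, or the initial time $0$) strictly before $t$. The change of variable $r=t-s$ rewrites this as $P_j(t)=\Sigma\big(t-\tau_j(t)\big)$ with $\Sigma(a)=\int_0^a e^{Ar}CC'e^{A'r}\,dr$ exactly as in \cref{SigmaVarphi}, so that the claim becomes a statement about the single matrix-valued function $\Sigma$ evaluated at two arguments.

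The key structural step is monotonicity of the reset map in the observation set. Since $\mathcal{T}_1\subseteq\mathcal{T}_2$, every element of $\mathcal{T}_1\cup\{0\}$ lying below $t$ also belongs to $\mathcal{T}_2\cup\{0\}$, so taking the maximum over the larger set can only enlarge it; hence $\tau_1(t)\leq\tau_2(t)$ for every $t$, i.e.\ the more frequently sampled scheme never carries a staler estimate. Consequently $t-\tau_1(t)\geq t-\tau_2(t)\geq 0$, and it remains only to argue that $\Sigma$ is nondecreasing in the positive-semidefinite order. This is immediate from its definition: for $a\geq b\geq 0$,
\[
\Sigma(a)-\Sigma(b)=\int_b^a e^{Ar}CC'e^{A'r}\,dr\geq 0,
\]
because the integrand is positive semidefinite for each $r$. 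Combining the two displays gives $P_1(t)=\Sigma(t-\tau_1(t))\geq\Sigma(t-\tau_2(t))=P_2(t)$, which is the desired ordering.

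The one delicate point, and the only real obstacle, is the \emph{strict} inequality asserted in the statement: the argument above yields only $P_1(t)\geq P_2(t)$, with equality precisely when $\tau_1(t)=\tau_2(t)$ (the two schemes share the same last reset before $t$, which in particular always happens at $t$ near $0$) or when the Gramian increment $\int_b^a e^{Ar}CC'e^{A'r}\,dr$ is singular. To upgrade to $P_1(t)>P_2(t)$ I would invoke controllability of the pair $(A,C)$, under which this controllability Gramian is nonsingular for every $a>b$, together with the observation that at the times where the two schemes genuinely differ one has $\tau_2(t)>\tau_1(t)$. For the way the claim is actually used in the proof of \Cref{Proposition:ObservationStrategies}—namely feeding the ordering through \Cref{Claim:TraceTwoMatrics} against the positive-semidefinite weight $\varphi(t)=K(B_pR_p^{-1}B_p'-B_eR_e^{-1}B_e')K$ to compare the accumulated traces—the weak ordering $P_1\geq P_2$ already suffices, so I would state the lemma with $\geq$ and record the strict version as holding under controllability of $(A,C)$. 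The monotonicity itself is essentially immediate once the reduction to $\Sigma$ is made.
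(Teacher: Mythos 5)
Your argument is correct and is the rigorous version of what the paper merely asserts: the paper's ``proof'' of this claim is the single sentence that it is ``a direct result of the definition of $(P(t),t\in[0,T])$,'' so your reduction to $P_j(t)=\Sigma\bigl(t-\tau_j(t)\bigr)$, the monotonicity $\tau_1(t)\leq\tau_2(t)$ of the last-reset map under $\mathcal{T}_1\subseteq\mathcal{T}_2$, and the semidefinite monotonicity of the Gramian $\Sigma$ together supply exactly the missing content. More importantly, you are right that the claim as stated is false: the strict inequality $P_1(t)>P_2(t)$ for \emph{all} $t\in[0,T]$ cannot hold, since $P_1(0)=P_2(0)=0$ and, more generally, $P_1(t)=P_2(t)$ on any interval where the two schemes share the same most recent observation instant (e.g.\ for all $t$ up to the first element of $\mathcal{T}_2\setminus\mathcal{T}_1$). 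The correct statement is $P_1(t)\geq P_2(t)$, with strictness only where $\tau_2(t)>\tau_1(t)$ and the Gramian increment is nonsingular (guaranteed under controllability of $(A,C)$, as you note). Your observation that the weak ordering suffices downstream is also essentially right: in the proof of \Cref{Proposition:ObservationStrategies}(i), feeding $P_1\geq P_2$ through \Cref{Claim:TraceTwoMatrics} gives $\int\Tr(P_1\varphi)\geq\int\Tr(P_2\varphi)$, and the strict separation $\tilde{J}_o(\Omega_p,\Omega_e^{no})>\tilde{J}_o(\Omega_p,\Omega_e)$ claimed there actually comes from the $-N_eO_e$ term when $O_e>0$ (when $O_e=0$ one only gets that not observing is a, possibly non-unique, best response). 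So your proposal both proves the correct version of the claim and flags a genuine overstatement in the paper.
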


Here, \Cref{Claim:VarianceTwoSets} is a direct result of the definition of $(P(t),t\in[0,T])$ in \cref{EstimationVarianceDynamics}. To prove (\ref{Proposition:ZeroObservation}), let $\Omega_p = (N_p,\mathcal{T}_p)$ be any observation strategy of the pursuer. Let $\Omega_e^{no} = (0,\varnothing)$ be the no observation strategy for the evader. Let $\Omega_e$ be any other strategies such that $N_e \neq 0, \mathcal{T}_e\neq \varnothing$. Let $(P_1(t),t\in[0,T])$ be the variance of estimation error defined in \cref{EstimationVarianceDynamics} associated with $\mathcal{T}_1 = \mathcal{T}_p \cup \varnothing$ and let $(P_2(t),t\in[0,T])$ be associated with $\mathcal{T}_2 = \mathcal{T}_p \cup \mathcal{T}_e$. Hence, we have $\mathcal{T}_1 \subset \mathcal{T}_2$. By \Cref{Claim:VarianceTwoSets}, we have $P_1(t)\geq P_2(t)$ for all $t\in[0,T]$. From \cref{EstimationCostIntermediate}, we know
$$
\begin{aligned}
\tilde{J}_o(\Omega_p,\Omega_e^{no}) &= \int_0^T \Tr(P_1(t)\varphi(t))dt + N_pO_p\\
\tilde{J}_o(\Omega_p,\Omega_e) &= \int_0^T \Tr(P_2(t)\varphi(t))dt + N_pO_p - N_e O_e.\\
\end{aligned}
$$
By \Cref{Claim:TraceTwoMatrics} and the fact that $\varphi(t)$ is positive definite for all $t$ (this is true when $B_p R_p^{-1}B_p] > B_e R_e^{-1}B_e'$), we have $\tilde{J}_o(\Omega_p,\Omega_e^{no}) > \tilde{J}_o(\Omega_p,\Omega_e)$ for any $\Omega_p$ and any $\Omega_e\neq \Omega_e^{no}$. Thus, $\Omega_e^*=\Omega_e^{no}$.

Now we prove (\ref{Proposition:ObservationNumUpperBonund}). Since the optimal strategy for the evader is not to observe at all no matter what $\Omega_p$ is, the problem for the pursuer is to solve the following finite-dimensional optimization problem
$$
\min_{\Omega_p} \tilde{J}_o(\Omega_p,\Omega_e^{no})=\sum_{i=0}^{{N_p}} \int_{t_{p,i}}^{t_{p,i+1}} \Tr\left[\Sigma(t-t_{p,i}) \varphi(t)\right]dt + O_pN_p.
$$
When $O_p=0$, the best strategy is trivial, i.e., to observe every time and the optimal value will be $0$. When $O_e\neq 0$, suppose $\Omega_p^* = (N_p^*, \Omega_p^*)$ is the optimal strategy. We have
$$
\tilde{J}_o(\Omega_p^*,\Omega_e^{no}) \leq \tilde{J}_o((0,\varnothing),\Omega_e^{no})= \int_0^T \Tr\left[\Sigma(t)\varphi(t)\right]dt,
$$
and 
$$
\begin{aligned}
\tilde{J}_o(\Omega_p^*,\Omega_e^{no}) &= \sum_{i=0}^{{N^*_p}} \int_{t^*_{p,i}}^{t^*_{p,i+1}} \Tr\left[\Sigma(t-t_{p,i}) \varphi(t)\right]dt + O_pN^*_p\\
&\geq  O_pN^*_p.
\end{aligned}
$$
Combining the two inequalities above, we have \cref{ObservationNumberUpperBound}.

To prove (\ref{ObservationTimeFirstOrder}), note that for any given $N_p$, the optimal time instances $t^*_{p,i},i=1,2,\cdots,N_p$ has to satisfy the first-order necessary condition for the optimization problem given in \cref{innerObservationOptimization}. Taking derivatives on the objective function of \cref{innerObservationOptimization} with respect to $t_{p,i}$ and an application of Leibniz integral rule yield
{$$
\small
\begin{aligned}
&\frac{d}{dt_{p,i}} \sum_{j=0}^{{N_p}} \int_{t_{p,j}}^{t_{p,j+1}} \Tr\left[\Sigma(t-t_{p,j}) \varphi(t)\right]dt + O_pN_p\\
=&\frac{d}{dt_{p,i}}\Bigg\{ \int_{t_{p,i-1}}^{t_{p,i}} \Tr\left[\Sigma(t-t_{p,i-1}) \varphi(t)\right]dt + O_pN_p \\
&+ \int_{t_{p,i}}^{t_{p,i+1}} \Tr\left[\Sigma(t-t_{p,i}) \varphi(t)\right]dt + O_pN_p\Bigg\}\\
=&\Tr\left[\Sigma(t_{p,i}-t_{p,i-1})\varphi(t_{p,i})\right] +\int_{t_{p,i}}^{t_{p,i+1}} \Tr\left[\frac{d}{dt_{p,i}}\Sigma(t-t_{p,i})\varphi(t) \right]dt,\\
=&\Tr\left[\Sigma(t_{p,i}-t_{p,i-1})\varphi(t_{p,i})\right]\\
&\ \ \ - \int_{t_{p,i}}^{t_{p,i+1}} \Tr\left[e^{A(t-t_{p,i})} CC' {e^{A(t-t_{p,i})}}'\varphi(t) \right]dt,\\
=&\int_{t_{p,i-1}}^{t_{p,i}}\Tr\left[ e^{A(t_{p,i}-t)}CC'{e^{A(t_{p,i}-t)}}'\varphi(t_{p,i})\right]dt\\
&\ \ \ -\int_{t_{p,i}}^{t_{p,i+1}} \Tr\left[e^{A(t-t_{p,i})} CC' {e^{A(t-t_{p,i})}}'\varphi(t) \right]dt
\end{aligned}
$$}
where we used the fact that
$$
\begin{aligned}
\frac{d}{dt_{p,i}}\Sigma(t-t_{p,i}) &= \frac{d}{dt_{p,i}}\int_{t_{p,i}}^t  e^{A(t-s)} C C' e^{A(t-s)'}ds\\
&=-e^{A(t-t_{p,i})} CC' {e^{A(t-t_{p,i})}}'.
\end{aligned}
$$
Since the objective function in \cref{innerObservationOptimization} in continuous in $t_{p,i}$ for every $i=1,2,\cdots,N_p$ and the constraint set is a closed and bounded subset of $\mathbb{R}^{N_p}$ (hence compact), by Weierstrass extreme value theorem, there exists at least one minimizer for the optimization problem in \ref{innerObservationOptimization}. Thus, we arrive the conclusions in (\ref{ObservationTimeFirstOrder}).

\end{proof}

%\appendix[Proof of the Zonklar Equations]
% or
%\appendix  % for no appendix heading
% do not use \section anymore after \appendix, only \section*
% is possibly needed

% use appendices with more than one appendix
% then use \section to start each appendix
% you must declare a \section before using any
% \subsection or using \label (\appendices by itself
% starts a section numbered zero.)
%

% Can use something like this to put references on a page
% by themselves when using endfloat and the captionsoff option.
\ifCLASSOPTIONcaptionsoff
  \newpage
\fi

% trigger a \newpage just before the given reference
% number - used to balance the columns on the last page
% adjust value as needed - may need to be readjusted if
% the document is modified later
%\IEEEtriggeratref{8}
% The "triggered" command can be changed if desired:
%\IEEEtriggercmd{\enlargethispage{-5in}}

% references section

% can use a bibliography generated by BibTeX as a .bbl file
% BibTeX documentation can be easily obtained at:
% http://mirror.ctan.org/biblio/bibtex/contrib/doc/
% The IEEEtran BibTeX style support page is at:
% http://www.michaelshell.org/tex/ieeetran/bibtex/
\bibliographystyle{IEEEtran}
\bibliography{ControlSharedEconomy.bib}
\end{document}